\documentclass[11pt]{article}

\usepackage[T1]{fontenc}
\usepackage{microtype}
\usepackage{amsmath,amsthm}
\usepackage{newpxtext,newpxmath}
\usepackage[margin=15pt,font=small,labelfont={bf}]{caption}
\usepackage[margin=1in]{geometry}
\usepackage[english]{babel}
\usepackage[utf8]{inputenc}

\usepackage[backend=biber,style=alphabetic,maxalphanames=3,minalphanames=3,maxbibnames=99,minbibnames=99]{biblatex}
\usepackage{graphicx}
\usepackage{algorithm}
\usepackage[noend]{algpseudocode}
\usepackage[belowfloats,bottom]{footmisc}
\usepackage{csquotes}

\AtBeginDocument{
  }
\usepackage{xcolor}
\usepackage[colorlinks=true,
    citecolor=darkgreen,
    linkcolor=darkred
]{hyperref}
\usepackage{aliascnt}
\definecolor{darkgreen}{rgb}{0,0.5,0.5}
\definecolor{darkred}{rgb}{0.5,0,0}
\newcommand\drop[1]{}
\newcommand*{\newaliastheorem}[3]{
  \newaliascnt{#1}{#2}
  \newtheorem{#1}[#1]{#3}
  \aliascntresetthe{#1}
  \expandafter\newcommand\csname #1autorefname\endcsname{#3}
}
\newtheorem{theorem}{Theorem}
\newaliastheorem{proposition}{theorem}{Proposition}
\newaliastheorem{lemma}{theorem}{Lemma}
\newaliastheorem{corollary}{theorem}{Corollary}
\newaliastheorem{definition}{theorem}{Definition}
\newaliastheorem{conjecture}{theorem}{Conjecture}
\newaliastheorem{example}{theorem}{Example}
\newaliastheorem{remark}{theorem}{Remark}

\addto\extrasenglish{

}

\title{Online Graph Coloring with Random Arrivals}
\author{
Hirotaka Yoneda\thanks{Supported by JSPS JP25K24465, JST ASPIRE JPMJAP2302, and JST ACT-X JPMJAX25CT.} \\
The University of Tokyo, Japan \\
squar37@gmail.com
\and
Masataka Yoneda\thanks{Supported by JSPS JP25K24465 and JST ASPIRE JPMJAP2302.} \\
The University of Tokyo, Japan \\
e869120@gmail.com
}
\date{}

\begin{document}
\maketitle

\pagenumbering{gobble}

\begin{abstract}
We study online graph coloring in the random-arrival model, where the vertices are presented in a uniformly random order. Assuming $n$ and $k$ are known in advance, we present a randomized exponential-time algorithm that colors any $k$-colorable graph on $n$ vertices with $O(k^3 \log^2 k \log^4 n)$ colors in expectation. This is the first online coloring algorithm that uses a polylogarithmic number of colors for all fixed $k \geq 3$, and is better than a known lower bound of $\Omega(\log^{k-1} n)$ colors under adversarial arrivals for all fixed $k \geq 6$.

In addition, assuming $n$ is known in advance, we obtain a competitive ratio of $\widetilde{O}(n^{2/3})$. For comparison, it is known to be NP-hard to approximate (offline) graph coloring within a factor of $n^{1-\varepsilon}$ for all fixed $\varepsilon > 0$. Therefore, our exponential-time online algorithm achieves a performance guarantee beyond the NP-hardness threshold for polynomial-time approximation algorithms.
\end{abstract}

\newpage

\pagenumbering{arabic}
\setcounter{page}{1}

\section{Introduction}

We study the \emph{online coloring} problem, in which vertices are revealed one by one. When a vertex arrives, we must assign it a color immediately, before the next vertex arrives, so that its color differs from the colors of all its previously revealed neighbors. See \autoref{fig:101} for an example. This is one of the most extensively studied problems in online algorithms, alongside the $k$-server problem, paging/caching, and online matching \cite{Alb03}.

\begin{figure}[b]
    \centering
    \includegraphics[width=\linewidth]{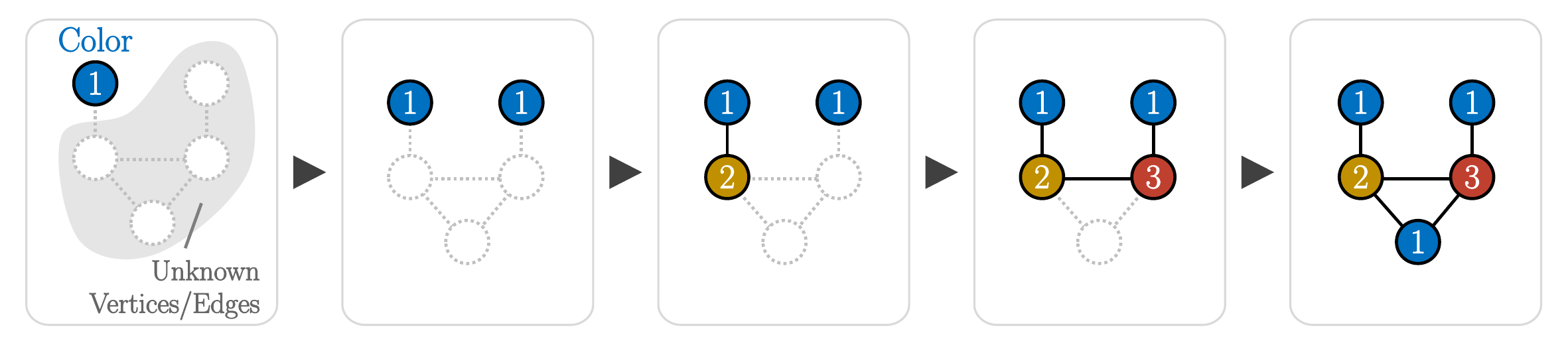}
    \caption{An example of online graph coloring. The numbers inside the vertices indicate their assigned colors. Vertices and edges drawn with gray dotted lines have not yet been revealed. The player must assign a color immediately when a vertex arrives.}
    \label{fig:101}
\end{figure}

\subsection{Online Coloring}

Online coloring is a classical problem for which natural algorithms are ineffective. The most intuitive strategy is \emph{First-Fit} \cite{GL88}, which always chooses the smallest available color. However, First-Fit can use $n/2$ colors even on an $n$-vertex bipartite graph, as shown in \autoref{fig:102} (such graphs are called crown graphs\footnote{A crown graph on $n$ vertices is a complete bipartite graph $K_{n/2,n/2}$ with a perfect matching removed. In particular, we consider the graph $G = (\{u_1, v_1, \dots, u_{n/2}, v_{n/2}\}, E)$ where $E = \{\{u_i, v_j\} : i \neq j\}$, and the vertices arrive in the order $u_1, v_1, \dots, u_{n/2}, v_{n/2}$.}). Accordingly, online coloring has been widely studied since the 1980s \cite{LST89}. The best-known upper bounds on the competitive ratio are as follows: $O(\frac{n}{\log n})$ for randomized algorithms by Halld\'{o}rsson (1997) \cite{Hal97}, and $O(\frac{n}{\log \log n})$ for deterministic algorithms by Kawarabayashi, Yoneda, and Yoneda (2025) \cite{KYY25}. In contrast, a lower bound of $\Omega(\frac{n}{\log^2 n})$ is known due to Halld\'{o}rsson and Szegedy (1994) \cite{HS94}.

\begin{figure}[t]
    \centering
    \includegraphics[width=\linewidth]{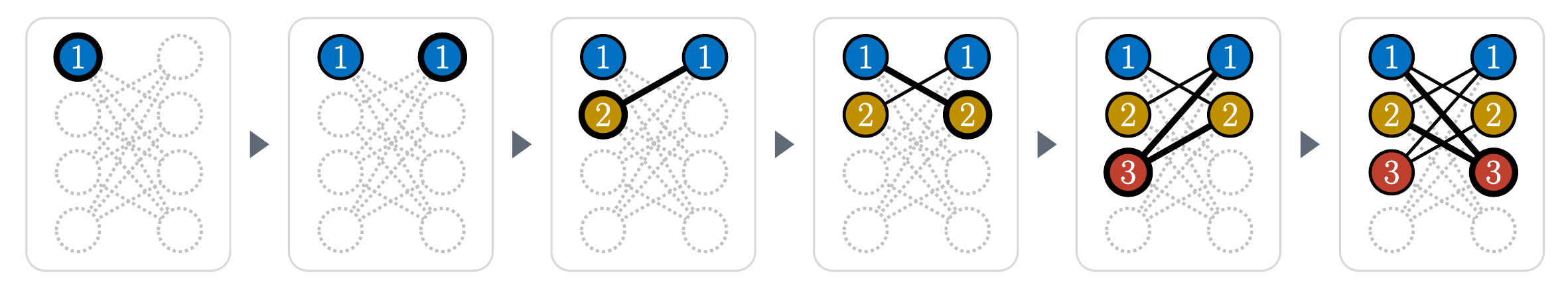}
    \caption{An $n$-vertex bipartite graph on which First-Fit uses $n/2$ colors.}
    \label{fig:102}
\end{figure}

Given these strong negative results, online coloring of $k$-colorable graphs for a fixed $k \geq 2$ has received attention. For bipartite graphs ($k = 2$), a deterministic algorithm by Lov\'{a}sz, Saks, and Trotter (1989) \cite{LST89} achieves $O(\log n)$ colors. The situation becomes much harder for $k \geq 3$: even for $k = 3$, the best-known results are a randomized $\widetilde{O}(n^{1/2})$-color algorithm by Vishwanathan (1992) \cite{Vis92} and a deterministic $\widetilde{O}(n^{2/3})$-color algorithm by Kierstead (1998) \cite{Kie98}. For comparison, the best-known lower bound is $\Omega(\log^{k-1} n)$ colors \cite{Vis92}, leaving a substantial gap between the upper and lower bounds.

Online coloring is also studied for special graph classes, e.g., trees \cite{Bea76}, interval graphs \cite{KT81}, planar graphs \cite{Ira94}, bounded-treewidth graphs \cite{AS21}, and graphs of large odd girth \cite{YY26}. See \cite{Kie05} for a survey on online coloring.

\subsection{The Random-Order Model}

The \emph{random-order model}, which assumes that the input order is chosen uniformly at random, is widely studied in the context of online algorithms. The most well-known problem is arguably the \emph{secretary problem}, in which $n$ candidates arrive for a job interview in random order, and the interviewer must decide immediately after each interview whether to hire the candidate. Only one candidate can be hired overall. Thanks to random arrivals, the classical secretary algorithm \cite{Lin61} selects the best candidate with probability at least $\frac{1}{e} \approx 0.368$. In contrast, under an adversarial arrival order, no algorithm can guarantee a success probability greater than $\frac{1}{n}$. The random-order model also leads to improved performance in many other online problems, including the following examples (see \cite{GS20} for a survey):

\begin{itemize}
    \item For the online bipartite matching problem, the optimal (tight) randomized competitive ratio is $1 - \frac{1}{e} \approx 0.632$ under adversarial arrivals \cite{KVV90}. However, under the random-order model, an algorithm achieves a competitive ratio of $0.696$ \cite{MY11}.
    
    \item For the online knapsack problem, no algorithm can achieve a bounded competitive ratio under adversarial arrivals. However, under the random-order model, an algorithm achieves a competitive ratio of $0.153$ \cite{KK26}.
\end{itemize}

\paragraph{Online Coloring.} In online coloring with random arrivals, the given graph $G = (V, E)$ is fixed, and the arrival order of the $n$ vertices is chosen uniformly at random from the $n!$ possibilities, each with probability $\frac{1}{n!}$. This setting has been particularly well-studied for First-Fit.

\begin{itemize}
    \item For trees, First-Fit uses $\Theta(\log n)$ colors under adversarial arrivals \cite{Bea76, GL88}, but it drops to $\Theta(\frac{\log n}{\log \log n})$ colors in expectation under the random-order model \cite{FGK+24, BGL+24}.
    
    \item For crown graphs (\autoref{fig:102}), First-Fit uses $n/2$ colors under adversarial arrivals, but it drops to $2.5 + O(\frac{1}{n})$ colors in expectation under the random-order model \cite{YXW+26}.
    
    \item For general bipartite graphs, it was conjectured that First-Fit only uses polylogarithmic number of colors in the random-order model \cite{BGL+24}, but it is claimed to be disproved; a very recent preprint shows a lower bound of $\Omega(n^{1/2000})$ colors in expectation \cite{Li26}.
\end{itemize}

However, due to the inherent difficulty of online coloring, few studies have investigated random arrivals beyond First-Fit. In particular, it remained open whether algorithms for $k$-colorable graphs could be improved under the random-order model; for example, whether an $O(n^{1/2-\varepsilon})$-color algorithm exists for 3-colorable graphs, which would improve upon the result of \cite{Vis92}.

\subsection{Our Contribution}

In this paper, we show that the random-order model substantially improves the performance in online coloring: $k$-colorable graphs can be colored online with a \emph{polylogarithmic} number of colors for any constant $k$. Formally, we prove the following main theorem.

\begin{theorem} \label{thm:main}
    In the random-order model, when both $n$ and $k$ are known in advance, there exists a randomized online algorithm that colors any $k$-colorable $n$-vertex graph using $O(k^3 \log^2 k \log^4 n)$ colors in expectation.
\end{theorem}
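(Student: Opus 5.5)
The approach uses the random order as a learning phase. Because the arrival order is uniform, the set of the first $t$ revealed vertices is a uniformly random $t$-subset of $V$. Being exponential-time, the algorithm can inspect the revealed induced subgraph $G_t$ and enumerate all its proper $k$-colorings. The plan is a doubling scheme over $O(\log n)$ phases, with phase $i$ covering arrivals $2^i < t \le 2^{i+1}$. Each phase gets its own fresh palette, so it suffices to color each phase with $O(k^3\log^2 k\log^3 n)$ colors in expectation. Within a phase we work only with the sample $S = V_{2^i}$ seen so far. For each new vertex $v$ we look at its neighborhood $N(v)\cap S$, and we try to assign $v$ a \enquote{type} determined by a fixed $k$-coloring $\chi$ of $G[S]$. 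The type is a set of colors from $[k]$ absent in $N(v)\cap S$. Two vertices sharing a common absent color $c$ of sufficiently \enquote{heavy} rank can then be certified nonadjacent with good probability.

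The key structural step is a sampling lemma: for a fixed $k$-coloring $\sigma$ of $G$, restrict attention to vertices $v$ whose number of neighbors in each class $\sigma^{-1}(c)$ is at least some threshold $d$. Each such $v$ sees, with high probability, about $d\cdot |S|/n$ neighbors of each class inside $S$. Hence vertices with many neighbors in the sample in every color class have small degree overall in that "unsampled" sense, and otherwise the set of classes in which $v$ has no sampled neighbor reveals information. Concretely, I would bucket vertices by scale $d\in\{2^j\}$, $O(\log n)$ buckets. The union bound over $O(\log n)$ scales and $O(\log n)$ phases yields factors of $\log^2 n$, and a further $\log n$ comes from a Lovász--Saks--Trotter-style recursion on the residual low-degree part. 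Within a bucket, the algorithm greedily assigns $v$ to a color class $c$ of a consistent $k$-coloring of $S\cup\{v\}$ chosen randomly among $k$ options. Conflicts arise only from edges between two new vertices placed in the same class, and these are bounded by degree arguments, handled via First-Fit inside the class. The $k^3\log^2 k$ comes from $k$ classes, a $k\log k$ repetition to make random guesses succeed, and $\log k$ from the union bound.

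The main obstacle is that the algorithm does not know $\sigma$. The coloring of $S$ it computes need not extend to the true coloring, and different consistent colorings can disagree wildly on new vertices. My first attempt would be to maintain a potential: choose among the colorings of $G[S]$ one that is consistent with the most revealed edges of later vertices. Whenever a conflict forces a change, the number of extendable colorings shrinks. If each forced change eliminates a constant fraction of candidate partitions restricted to the relevant "heavy" vertices, only polylogarithmically many changes occur, each costing a fresh palette. Controlling this restart count, in expectation over the random order, is the step I expect to be hardest and where the extra $\log n$ factors enter.
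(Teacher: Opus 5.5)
Your phase decomposition is a valid reduction; it plays the role of the paper's recursion. Inside a phase, however, there is a genuine gap, exactly at the step you defer. Nothing in the proposal bounds the maximum degree of the graph induced on the new vertices placed in a common class, and handling conflicts ``via First-Fit inside the class'' needs such a bound.

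Your certification claim fails even when $\chi$ is the restriction of a true $k$-coloring. Take $k=3$ and let $G$ be $K_{d,d}$ with sides $L,R$ plus $N\gg d$ isolated vertices. Let $\chi$ give $L\cap S$, $R\cap S$ and the sampled isolated vertices colors $1,2,3$. Every new vertex of $K_{d,d}$ has color $3$ absent, and $3$ is by far the heaviest class. So a rule preferring heavy absent colors (or picking an absent color at random) puts a constant fraction of $L\setminus S$ and of $R\setminus S$ into one class, whose maximum degree is $\Theta(d)$. Concentration of sampled neighbor counts only tells you that a vertex with many neighbors in a class will see one of them. It says nothing about edges among vertices with few such neighbors, and those can be dense. (As written, your sampling lemma is even vacuous, since $v$ has no neighbors in its own class $\sigma^{-1}(\sigma(v))$.)

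The restart/potential repair does not close the gap, for three reasons. First, $G[S]$ can have $k^{\Theta(n)}$ proper colorings (here because of the isolated vertices). So even if each forced change eliminated a constant fraction of candidates, you could have $\Theta(n\log k)$ restarts, each with a fresh palette. Second, restricting to ``heavy'' vertices helps only if there are polylogarithmically many of them, which you do not show. Third, you give no mechanism by which a conflict between two new vertices rules out any fixed fraction of colorings of $S$. Separately, the attribution of $k^3\log^2 k$ to ``$k$ classes, $k\log k$ repetitions, a union bound'' is read off the target rather than derived.

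The missing idea is to learn from $S$ a \emph{random} object whose law is stable under moving a few new vertices into $S$. The paper samples $I_1\sim\mathsf{IS}_\lambda(G[S])$ from the hardcore model, then $I_{i+1}\sim\mathsf{IS}_\lambda(G[A_{i+1}])$ with $A_{i+1}=N_{A_i}(I_i)$. Each new vertex $v$ goes into the first bin $J_i$ with $N_{I_i}(v)=\emptyset$.

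The degree bound comes from an injection on histories, not a per-vertex concentration argument. Suppose $v\in J_t$ has neighbor set $X$ inside $J_t$, and pick an independent $Y\subseteq X$ with $|Y|\ge|X|/k$. For each $W\subseteq Y$, adding $W$ to $A_1,\dots,A_t$ and to $I_t$ gives a distinct valid history. Its probability is at least $(\lambda/(1+\lambda)^t)^{|W|}$ times the original, because $Z_\lambda$ grows by at most $(1+\lambda)^{|W|}$ per level and $S$ is uniform over all $2^n$ subsets. (This is why the paper splits at $m\sim\mathsf{Binomial}(n,\frac12)$; with fixed doubling boundaries the shifted histories would have probability zero.) Since the total probability of all these histories is at most $1$, the expectation of $(1+\lambda/(1+\lambda)^t)^{|X|/k}$ restricted to the event $v\in J_t$ is at most $1$. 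Hence every bin $t\le 1/\lambda$ has maximum degree $O(k\log n/\lambda)$ w.h.p.

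Coverage comes from $\mathbb{E}|I\cup N(I)|\le(1-\frac{1}{3k\ln(3k)})|A|$ (Jensen plus the bounds on $Z_\lambda$), which gives $r=O(k\log k\log n)$ levels. The choice $\lambda=1/(6k\ln(3k)\ln n+2)$ makes $r\le 1/\lambda$ w.h.p., hence $O(k^3\log^2 k\log^3 n)$ colors per recursion level. A coloring of $G[S]$ chosen deterministically or by a consistency potential has no comparable stability: adding $W$ to $S$ can change it completely. So your framework has no substitute for this step.
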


This result significantly improves upon the previous $\widetilde{O}(n^{1-\frac{1}{k-1}})$-color bound \cite{Vis92}. In addition, for all fixed $k \geq 6$, our random-order result is better than the known lower bound of $\Omega(\log^{k-1} n)$ colors for adversarial arrivals \cite{Vis92}. Therefore, we establish the \emph{asymptotic separation} between the random-order model and the adversarial-order model.

Moreover, assuming the result of the recent preprint \cite{Li26}, which claims that First-Fit uses expected $\Omega(n^{1/2000})$ colors for some bipartite graphs under random arrivals (and hence, for every $k \geq 2$, some $k$-colorable graphs), our algorithm asymptotically outperforms First-Fit for all $k \geq 2$.

Furthermore, we obtain a competitive ratio of $\widetilde{O}(n^{2/3})$, as stated in the following theorem.

\begin{theorem} \label{thm:main2}
    In the random-order model, when $n$ is known in advance, there exists a randomized online algorithm that attains a competitive ratio of $\widetilde{O}(n^{2/3})$.
\end{theorem}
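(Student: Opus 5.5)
We derive \autoref{thm:main2} from \autoref{thm:main} by guessing $k$ and combining it with a trivial fallback. Let $\chi = \chi(G)$ be the unknown chromatic number. We aim for $\widetilde{O}(n^{2/3})\cdot\chi$ colors in expectation, and we split into two regimes.

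\emph{Large $\chi$.} If $\chi \ge n^{1/3}$, then any coloring that uses $O(n/\log n)$ colors, or even just $n$ colors, is within a factor $n^{2/3}$ of optimal. So every algorithm that never uses more than $O(n)$ colors is fine in this regime.

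\emph{Small $\chi$.} If $\chi < n^{1/3}$, we would like to run the algorithm of \autoref{thm:main} with $k = \chi$. It then uses $O(\chi^3\log^2\chi\log^4 n) \le \widetilde{O}(n^{2/3})\cdot\chi$ colors. The difficulty is that $\chi$ is unknown. We handle this with a standard doubling scheme over guesses $k_j = 2^j$ for $j = 1,\dots,\lceil \tfrac13\log_2 n\rceil$, with a disjoint color palette for each guess.

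\emph{Detecting a failed guess.} The main obstacle is deciding online when a guess has failed. The algorithm of \autoref{thm:main} is analyzed only for $k$-colorable inputs. We therefore need a version that still produces a proper coloring and has a detectable failure condition when $G$ is not $k$-colorable. Since the algorithm is exponential-time, we can check whether the revealed induced subgraph $G[V_t]$ is $k_j$-colorable after each arrival. If it is not, we abandon guess $j$ and move to $j+1$ with a fresh palette. While guess $j$ is active, the revealed graph is $k_j$-colorable.

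\emph{Bounding the colors per guess.} We still must bound the number of colors used during the active phase of each guess. The random-order analysis of \autoref{thm:main} concerns the whole permutation, whereas here a prefix of it is colored under the guess $k_j$. The plan is to exploit the structure of the random order. Conditioned on the set $S$ of vertices revealed so far, the order restricted to $S$ is uniform. Moreover, $G[S]$ is $k_j$-colorable on the event that guess $j$ is still active. I expect that the analysis of \autoref{thm:main} goes through when the algorithm is run on a random prefix whose stopping time depends only on the revealed set. This may need a slight strengthening of that theorem, namely that the bound holds uniformly for every prefix of the order. If that strengthening is not available, the fallback is different: for each vertex, use the algorithm of \autoref{thm:main} on the guess and bound the colors via monotonicity in $k$.

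\emph{Summing over guesses.} Summing over the guesses gives $\sum_{j:\,k_j\le 2\chi} O(k_j^3\log^2 k_j\log^4 n) = O(\chi^3\log^2\chi\log^4 n)$, since the terms grow geometrically.

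\emph{Fallback once guesses are exhausted.} If the last guess $k_j \approx n^{1/3}$ also fails, then $\chi > n^{1/3}$. We then switch to First-Fit on a fresh palette, which uses at most $n$ colors, i.e., ratio at most $n^{2/3}$. Alternatively, we can use Halld\'orsson's algorithm for a slightly better constant.

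\emph{Conclusion.} In both regimes the expected number of colors is $\widetilde{O}(n^{2/3})\chi$. This uses knowledge of $n$ only, to set the threshold $n^{1/3}$ and to run \autoref{thm:main} with the known $n$.
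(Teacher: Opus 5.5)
There is a genuine gap, and it is the step you flag yourself: bounding the number of colors a \emph{wrong} guess $k_j<\chi$ uses before your colorability test retires it. Your geometric sum $\sum_{j:\,k_j\le 2\chi}O(k_j^3\log^2k_j\log^4 n)$ assumes each such guess obeys the bound of \autoref{thm:main}, but that theorem does not apply. First, $G^*$ is not $k_j$-colorable. Second, the event ``guess $j$ is still active'' depends on which vertices have arrived, so on that event the set $S$ of the first $m$ vertices is no longer uniform over $2^{V^*}$. The proof of \autoref{lem:magic} needs both the weight $2^{-n}$ for every $S$ and the $k$-colorability of $G^*[X(\mathcal{H})]$ for every history. \autoref{lem:rounds} likewise needs $G[A_i]$ to be $k$-colorable. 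So neither the theorem nor its proof can be used as a black box on a stopped prefix. A stopped-prefix version may be provable, for instance by restricting $X(\mathcal{H})$ to neighbors arriving before $v$ and redoing the injection with arrival times included. But that is a new lemma you have not supplied, and the ``monotonicity in $k$'' fallback is not an argument.

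The paper avoids failure detection entirely. It runs $\textsf{COLORING}_{2^q}(n)$ for all $q=0,\dots,\lceil\log_2 n\rceil$ in parallel from the first arrival, on disjoint palettes. Each process outputs a proper coloring on any input, since bins are colored by First-Fit with separate palettes. Each arriving vertex is colored by the process that has used the fewest colors so far. Then no process contributes more than $c_{q^*}$ colors, where $q^*=\lceil\log_2\chi(G^*)\rceil$, so the total is at most $c_{q^*}(\lceil\log_2 n\rceil+1)$. Only the correct process is ever analyzed, and for it $G^*$ really is $2^{q^*}$-colorable, so \autoref{thm:main} applies directly. Combining the resulting $O(\chi^3\log^2\chi\log^5 n)$ with the trivial bound $n$ gives ratio $O(n^{2/3}\log^{7/3}n)$. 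Your regime split and First-Fit fallback are fine, but they are unnecessary once this rule is used. A color-budget cutoff in your doubling scheme would also cap wrong guesses by construction, but you would then need a tail bound, not just an expectation bound, for the correct guess.
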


Given the known lower bound of $\Omega(\frac{n}{\log^2 n})$ on the competitive ratio for adversarial arrivals \cite{HS94}, this also establishes the \emph{asymptotic separation} between the random-order model and the adversarial-order model in terms of the competitive ratio. Note that our result assumes that $n$ is known in advance. This is a standard assumption in the random-order model; several previous studies \cite{Alb21a, Alb21b, Lin61, San25} also make the same assumption.

\subsection{Online Algorithms vs. Approximation Algorithms}

A broader question in theoretical computer science concerns the relationship between the \emph{competitive ratio} in online algorithms and the \emph{approximation ratio} in polynomial-time approximation algorithms. This topic is discussed in various studies \cite{KKL07, MAS14}. In general, if an online algorithm achieves a performance guarantee beyond the hardness barrier for polynomial-time approximation algorithms, then it must run in super-polynomial time.

As a matter of fact, online algorithms can achieve better performance guarantees than approximation algorithms for some problems, although we did not find any references that explicitly mention this observation. The work of Rubinstein (2016) \cite{Rub16} presents an $O(\log n)$-competitive algorithm for the (maximum cardinality) secretary problem on a downward-closed set system. This immediately yields an $O(\log n)$-competitive algorithm for ``online maximum independent set'' with random arrivals. In contrast, approximating (offline) maximum independent set within a factor of $n^{1-\varepsilon}$ is NP-hard for every fixed $\varepsilon > 0$, as shown by Zuckerman (2006) \cite{Zuc06}.

This paper gives an exponential-time $\widetilde{O}(n^{2/3})$-competitive algorithm for online coloring with random arrival order (\autoref{thm:main2}), whereas approximating (offline) graph coloring within a factor of $n^{1-\varepsilon}$ is NP-hard for every fixed $\varepsilon > 0$ \cite{Zuc06}. Therefore, we establish the \emph{asymptotic separation} between online and (deterministic or randomized) approximation algorithms in graph coloring, under the widely believed assumption that $\textsf{NP} \not\subseteq \textsf{BPP}$ \cite{AB09}. We note that, although a graph coloring can be obtained by repeatedly choosing a maximum independent set among the uncolored vertices, Rubinstein's algorithm \cite{Rub16} does not directly imply an online coloring algorithm for random arrivals because the random arrival order on the uncolored vertices is not guaranteed.

\begin{remark}
    Since the approximation side allows randomized algorithms, including random-order processing, we regard this comparison as fair. Moreover, our result achieves $O(\log^4 n)$ colors for 3-colorable graphs; however, no polynomial-time algorithm that colors 3-colorable graphs with $O(n^{1/10})$ colors is known at the time of writing.\footnote{Bansal, Huang, and Lee (2026) \cite{BHL26} present an algorithm that colors 3-colorable graphs with $O(n^{0.1954})$ colors in polynomial time. Very recently, Song (2026) \cite{Son26} published a preprint claiming to achieve $O(n^{0.1653})$ colors.}
\end{remark}

\subsection{Overview of the Algorithm} \label{subsec:overview}

In this subsection, we present an overview of our algorithm and the intuition behind using random arrival order. For simplicity, we consider the case $k = 3$ here; the goal is to color a 3-colorable graph $G = (V, E)$ online with $O(\log^4 n)$ colors.

First, it suffices to devise an algorithm that colors the last $n/2$ vertices with $O(\log^3 n)$ colors, using information from the first $n/2$ vertices. If this can be done, we can color the entire graph with $O(\log^4 n)$ colors by recursively applying the same algorithm to the first $n/2$ vertices (for example, we color the $n/4+1, \dots, n/2$-th vertices using information from the first $n/4$ vertices), using completely disjoint color sets for different levels of recursion. Here, we denote the set of the first $n/2$ vertices by $S$.

\paragraph{The Benefit of Random Order.} The benefit of random arrival order is that $S \subseteq V$ can be assumed to be chosen uniformly at random. Hence, we can assume that the two parts, $G[S]$ and $G[V \setminus S]$, have similar properties. For example, the sizes of maximum independent sets and the sizes of each color class in a 3-coloring $\varphi: V \to \{1, 2, 3\}$ are very similar.

In particular, for a sufficiently large independent set $I \subseteq V$, the parts of $I$ that lie in $S$ and in $V \setminus S$ do not differ much in size, with a statistical error of only $\pm \widetilde{O}(\sqrt{n})$. Therefore, after the first $n/2$ vertices have arrived, if we ``sample'' an independent set $I \subseteq S$ of $G[S]$, then it is likely that $I$ can be ``extended'' to the vertices in $V \setminus S$ to the extent as large as $I$. This mechanism allows us to obtain a large independent set in $G[V \setminus S]$ online.

\paragraph{Hardcore Sampling and Extension.} After receiving all vertices in $S$, we sample an independent set $I \subseteq S$ from the \emph{hardcore model} \cite{DK25, Wei06} with parameter $\lambda \approx \frac{1}{40 \ln n}$, where $I$ is chosen with probability proportional to $\lambda^{|I|}$. As implied by the fact that a 3-colorable graph $G[S]$ has an independent set of size at least $|S|/3 \approx n/6$, the expected size of $I$ is $\Theta(\lambda n) = \Theta(\frac{n}{\log n})$.

Then, the set of ``eligible'' vertices for extending the independent set $I$ toward $V \setminus S$ is $J := \{v \in V \setminus S : N_I(v) = \emptyset\}$, which is sufficiently large by the discussion above.\footnote{Although we do not directly prove that the ``extension'' is sufficiently large in the subsequent proof, we believe that this fact helps explain the intuition behind the algorithm.} While $J$ itself may not be an independent set, it can be proven that $G[J]$ most probably has maximum degree $O(\log^2 n)$. See \autoref{fig:104} for an illustration. This implies that a large fraction of vertices in $V \setminus S$ can be colored online using First-Fit with only $O(\log^2 n)$ colors.

\begin{figure}[t]
    \centering
    \includegraphics[width=\linewidth]{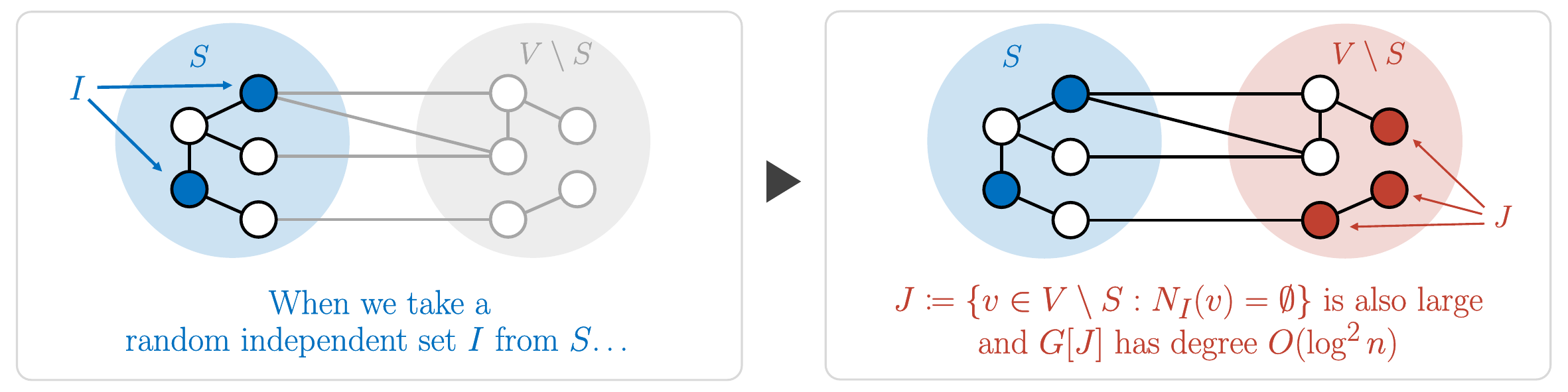}
    \caption{Sampling an independent set from $S$ and attempting to extend it into $V \setminus S$.}
    \label{fig:104}
\end{figure}

\paragraph{Covering $V \setminus S$ with Multiple Samples.} The issue is that, with a single independent set $I$, only a fraction of the vertices in $V \setminus S$ are colored. We address this issue by \emph{carefully} sampling a sequence of independent sets $I_1, \dots, I_r \ (I_r = \emptyset)$, where $r$ is most probably $O(\log n)$. We also maintain ``bins'' $J_1, \dots, J_r \subseteq V \setminus S$ for the extension counterpart, initially empty.

For each arrival of $v \in V \setminus S$, we find the smallest $i$ such that $N_{I_i}(v) = \emptyset$, and put $v$ into $J_i$. Then, we can prove that each $G[J_i]$ most probably has maximum degree $O(\log^2 n)$. Hence, by applying First-Fit to each of $G[J_1], \dots, G[J_r]$, using a separate set of colors for each bin, we can most probably color the entire $V \setminus S$ with $r \cdot O(\log^2 n) = O(\log^3 n)$ colors.

\subsection{Organization}

\autoref{sec:prelim} presents some formal definitions and notation. \autoref{sec:hardcore} introduces the hardcore model and investigates its properties. \autoref{sec:algorithm} introduces the main algorithm, and \autoref{sec:analysis} analyzes the number of colors used. \autoref{sec:competitive-ratio} presents an algorithm with a competitive ratio of $\widetilde{O}(n^{2/3})$. Finally, \autoref{sec:conclusion} concludes the paper and presents some conjectures.


\section{Preliminaries} \label{sec:prelim}

Before moving on to our algorithms, we present the required notation and definitions. First, we formally define online coloring with random arrival order. Then, we define First-Fit, the most straightforward strategy for online coloring.

\begin{definition}[Online Coloring]
    In the online coloring problem, an undirected graph $G = (V, E)$ is initially empty; i.e., there are zero vertices. The vertices of $G$ are given one by one. Immediately after a vertex $v$ arrives (along with all edges incident to previously arrived vertices), we must assign a color $c(v) \in \mathbb{N}$ that differs from the colors of all adjacent vertices. The goal is to minimize the number of colors used, $|\{c(v) : v \in V\}|$.
\end{definition}

\begin{definition}[Random-Order Model] \label{def:random-order}
    The random-order model of online coloring assumes that a graph $G^* = (V^*, E^*)$, where $V^* = \{v_1, \dots, v_n\}$, is initially fixed and is hidden from the algorithm. We additionally assume that $n$ \textbf{is known in advance}. The vertices arrive in the order $v_{\sigma(1)}, \dots, v_{\sigma(n)}$, where $\sigma: \{1, \dots, n\} \to \{1, \dots, n\}$ is a permutation that is selected uniformly at random from $n!$ possibilities, each with probability $\frac{1}{n!}$.
\end{definition}

\begin{definition}[First-Fit]
    First-Fit is an algorithm for online coloring that assigns each new vertex the smallest available color.
\end{definition}

We use the following notation. Given a graph $G = (V, E)$, we denote the set of vertices adjacent to $v \in V$ as $N(v)$, and the set of vertices adjacent to at least one vertex in $S \subseteq V$ as $N(S)$. For a vertex set $X \subseteq V$, we denote $N_X(v) := N(v) \cap X$ and $N_X(S) := N(S) \cap X$. We denote the subgraph induced by $S \subseteq V$ as $G[S]$. We denote the chromatic number of $G$ as $\chi(G)$. We denote the set of all independent sets as $\mathcal{I}(G)$. In $O$-, $\Omega$-, and $\Theta$-notation, we write $\widetilde{O}(f(n))$, $\widetilde{\Omega}(f(n))$, and $\widetilde{\Theta}(f(n))$ to hide polylogarithmic factors in $f(n)$.


\section{Hardcore Model} \label{sec:hardcore}

In this section, we discuss the properties of randomly sampled independent sets. We sample independent sets using the \emph{hardcore model} \cite{DK25, Wei06}.

\begin{definition}[Hardcore Model]
    Given a graph $G = (V, E)$, the hardcore model $\textsf{IS}_{\lambda}(G)$ with parameter $\lambda > 0$ is a probability distribution that chooses an independent set $I \in \mathcal{I}(G)$ with probability proportional to $\lambda^{|I|}$. More precisely, define
    \begin{equation*}
        Z_{\lambda}(G) := \sum_{I \in \mathcal{I}(G)} \lambda^{|I|}.
    \end{equation*}
    Then, an independent set $I \in \mathcal{I}(G)$ is chosen with probability $\frac{\lambda^{|I|}}{Z_{\lambda}(G)}$. See \autoref{fig:201} for an example.
\end{definition}

\begin{figure}[b]
    \centering
    \includegraphics[width=\linewidth]{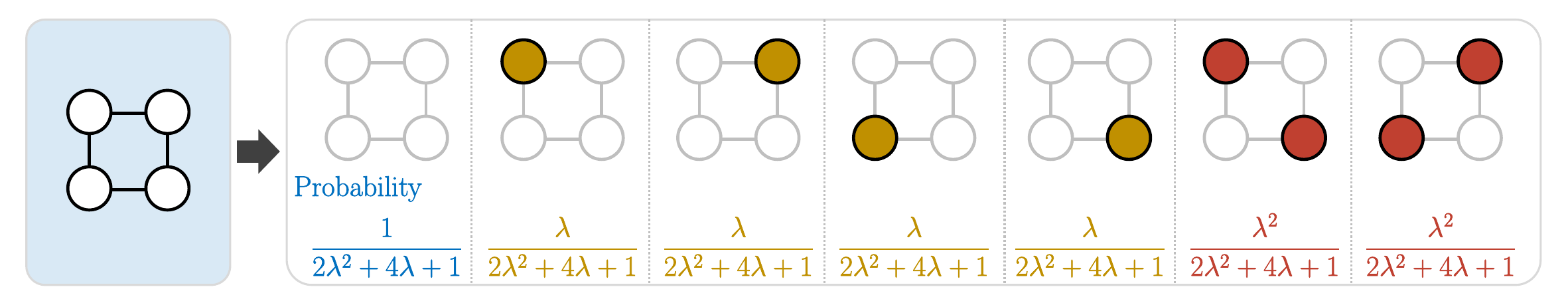}
    \caption{An example of the hardcore model, with the probability of selecting each independent set.}
    \label{fig:201}
\end{figure}

\subsection{Basic Properties}

First, we present some properties of the hardcore model that are used in the subsequent proofs.

\begin{proposition} \label{prop:hardcore-1}
    Let $\lambda > 0$, and let $G = (V, E)$ be a $k$-colorable graph. Then,
    \begin{equation*}
        (1+\lambda)^{|V|/k} \leq Z_{\lambda}(G) \leq (1+\lambda)^{|V|}.
    \end{equation*}
\end{proposition}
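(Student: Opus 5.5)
Both inequalities come from comparing $\mathcal{I}(G)$ with the power sets of suitable vertex sets, since $\sum_{I \subseteq X} \lambda^{|I|} = (1+\lambda)^{|X|}$ for any finite set $X$ by the binomial theorem.

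For the upper bound, we use that every independent set is a subset of $V$. Because $\lambda > 0$, all terms are nonnegative, so
\begin{equation*}
    Z_{\lambda}(G) = \sum_{I \in \mathcal{I}(G)} \lambda^{|I|} \leq \sum_{I \subseteq V} \lambda^{|I|} = (1+\lambda)^{|V|}.
\end{equation*}

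For the lower bound, we fix a proper $k$-coloring $\varphi: V \to \{1, \dots, k\}$. Its color classes partition $V$ into $k$ sets, so by pigeonhole some class $C$ has $|C| \geq |V|/k$. Since $C$ is an independent set, every subset of $C$ is independent as well, i.e., $2^C \subseteq \mathcal{I}(G)$. Dropping the remaining (nonnegative) terms then gives
\begin{equation*}
    Z_{\lambda}(G) \geq \sum_{I \subseteq C} \lambda^{|I|} = (1+\lambda)^{|C|} \geq (1+\lambda)^{|V|/k},
\end{equation*}
where the last step uses $1+\lambda > 1$ together with $|C| \geq |V|/k$.

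There is no real obstacle here. The only points needing care are that nonnegativity of the terms, guaranteed by $\lambda > 0$, justifies both comparisons, and that the exponent $|V|/k$ need not be an integer, which causes no trouble because the bound only requires $|C| \geq |V|/k$.
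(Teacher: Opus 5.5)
Your proof is correct and follows the paper's argument exactly. You bound $Z_{\lambda}(G)$ above by summing over all subsets of $V$, and below by the power set of a color class $C$ with $|C| \geq |V|/k$, evaluating both sums with the binomial theorem.
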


\vspace{1.0em}

\begin{proof}
    For the upper bound, $Z_{\lambda}(G) \leq \sum_{S \subseteq V} \lambda^{|S|} = (1+\lambda)^{|V|}$ by the binomial theorem. For the lower bound, since $G$ has a $k$-coloring, one color class has size at least $|V|/k$, so it contains an independent set $I' \subseteq V$ with $|I'| \geq |V|/k$. By the binomial theorem,
    \begin{equation*}
        Z_{\lambda}(G)
        \geq \sum_{I \subseteq I'} \lambda^{|I|}
        = (1+\lambda)^{|I'|}
        \geq (1+\lambda)^{|V|/k}.
        \qedhere
    \end{equation*}
\end{proof}

\begin{proposition} \label{prop:hardcore-2}
    Let $\lambda > 0$, and let $G = (V, E)$ be a graph. For disjoint vertex sets $S, T \subseteq V \ (S \cap T = \emptyset)$,
    \begin{equation*}
        Z_{\lambda}(G[S \cup T]) \leq Z_{\lambda}(G[S]) \cdot (1+\lambda)^{|T|}.
    \end{equation*}
\end{proposition}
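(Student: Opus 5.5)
The plan is to use an injection argument: split each independent set of $G[S \cup T]$ into its part in $S$ and its part in $T$, and bound the sum over the $T$-parts crudely by summing over all subsets of $T$.

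\textbf{Step 1 (decomposition).} Every independent set $I \in \mathcal{I}(G[S \cup T])$ determines the pair $(I \cap S, I \cap T)$. Since $S \cap T = \emptyset$, this map is injective, and $|I| = |I \cap S| + |I \cap T|$. Moreover, $I \cap S$ is independent in $G[S]$, because any subset of an independent set is independent. Finally, $I \cap T$ is some subset of $T$.

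\textbf{Step 2 (the estimate).} Since every term $\lambda^{|I|}$ is positive, we may enlarge the index set to all pairs $(A, B)$ with $A \in \mathcal{I}(G[S])$ and $B \subseteq T$:
\begin{equation*}
    Z_{\lambda}(G[S \cup T]) = \sum_{I} \lambda^{|I \cap S|}\lambda^{|I \cap T|} \leq \sum_{A \in \mathcal{I}(G[S])} \lambda^{|A|} \sum_{B \subseteq T} \lambda^{|B|} = Z_{\lambda}(G[S]) \cdot (1+\lambda)^{|T|}.
\end{equation*}
The last factor is evaluated by the binomial theorem, exactly as in the upper bound of \autoref{prop:hardcore-1}.

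The only point that needs care is noting that the terms are positive, so dropping the independence constraints on $B$, and between $A$ and $B$, can only increase the sum. Otherwise the argument is routine.
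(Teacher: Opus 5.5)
Your proof is correct and is essentially the paper's argument: the paper splits $I = I_S \cup I_T$ to get $Z_{\lambda}(G[S \cup T]) \leq Z_{\lambda}(G[S]) \cdot Z_{\lambda}(G[T])$ and then applies the upper bound of Proposition~\ref{prop:hardcore-1} to $Z_{\lambda}(G[T])$, whereas you merge these two steps into one by summing directly over all $B \subseteq T$. One small correction: injectivity of $I \mapsto (I \cap S, I \cap T)$ follows from $I \subseteq S \cup T$, and it is disjointness that gives $|I| = |I \cap S| + |I \cap T|$.
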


\begin{proof}
    Each $I \in \mathcal{I}(G[S \cup T])$ can be written as $I = I_S \cup I_T$, where $I_S \in \mathcal{I}(G[S])$ and $I_T \in \mathcal{I}(G[T])$. Therefore, $Z_{\lambda}(G[S \cup T]) \leq Z_{\lambda}(G[S]) \cdot Z_{\lambda}(G[T])$. By \autoref{prop:hardcore-1}, we have $Z_{\lambda}(G[T]) \leq (1+\lambda)^{|T|}$, which proves the desired inequality.
\end{proof}

\subsection{The Size of Independent Sets}

We analyze the expected size of an independent set. Note that the case $\lambda = 1$ was already studied by Nilli (1988) \cite{Nil88}, who showed that the average size of independent sets for a $k$-colorable graph is $\Omega(\frac{n}{k \log k})$ and proved that this bound is tight.

\begin{lemma} \label{lem:hardcore-3}
    Let $\lambda \in (0, 1]$, and let $G = (V, E)$ be a $k$-colorable graph. Then,
    \begin{equation*}
        \mathbb{E}_{I \sim \mathsf{IS}_{\lambda}(G)}[|I|] \geq \frac{\lambda}{3k \ln (3k)} |V|.
    \end{equation*}
\end{lemma}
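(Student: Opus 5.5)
Writing $\mu := \mathbb{E}_{I \sim \mathsf{IS}_{\lambda}(G)}[|I|]$, I would argue via the log-derivative of the partition function. Since $\mu = \lambda \frac{d}{d\lambda} \ln Z_{\lambda}(G)$, the mean is monotone along the curve. Moreover, by log-convexity of $Z$ in $\ln\lambda$, the quantity $\mu(\lambda)$ is nondecreasing in $\lambda$.

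The first attempt is to integrate. This gives $\ln Z_{\lambda} = \int_0^{\lambda} \mu(t)/t \, dt \le \mu(\lambda)\ln(\ldots)$, but that integral diverges near $0$, so this route does not work directly. I would instead use a more local, vertex-wise argument in the spirit of Nilli and Shearer.

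Fix a $k$-coloring with classes $C_1,\dots,C_k$. For each vertex $v$, the standard hardcore identity is
\begin{equation*}
\Pr[v \in I] = \frac{\lambda}{1+\lambda}\Pr[N(v)\cap I = \emptyset].
\end{equation*}
This holds because, conditioned on $I \setminus \{v\}$ with $N(v)\cap I=\emptyset$, the vertex $v$ is included with probability $\lambda/(1+\lambda)$. So $\mu \ge \frac{\lambda}{2}\sum_v \Pr[N_I(v)=\emptyset]$ using $\lambda\le 1$. Write $U := \{v : N_I(v)=\emptyset\}$, the set of uncovered vertices. The task becomes showing that $\mu \ge c\lambda n/(k\ln k)$ by a dichotomy: either $|I|$ is large, or many vertices are uncovered.

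I would do this via a $\log$-trade-off bound. Consider the partition of $I$ into $I\cap C_j$, and restrict to a single class $C_j$. Given $I\setminus C_j$, the vertices of $C_j$ not adjacent to $I\setminus C_j$ are free, say $f_j$ of them, and $I\cap C_j$ is then distributed as an independent $\mathrm{Bin}(f_j, \lambda/(1+\lambda))$. Hence
\begin{equation*}
\mathbb{E}|I\cap C_j| = \frac{\lambda}{1+\lambda}\mathbb{E}[f_j].
\end{equation*}
It then suffices to lower bound $\sum_j \mathbb{E} f_j$, the expected number of vertices whose neighborhood avoids $I$ outside their own class, by $\Omega(n/(k\ln k))$.

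For this I would use an entropy or counting argument comparing with the largest class. Pick $C$ with $|C|\ge n/k$. If a $t$-fraction of $C$ is expected free, we are done when $t \gtrsim 1/(k\ln k)$. Otherwise most of $C$ is covered by $I$, which forces $I$ to be large unless $I$ has high-degree vertices. The cleanest way to make this rigorous is Nilli's proof scaled by $\lambda$: compare $Z_\lambda(G)$ against $Z_\lambda(G - C)$. Using Proposition~\ref{prop:hardcore-2},
\begin{equation*}
Z_\lambda(G) = \sum_{J\in\mathcal I(G-C)}\lambda^{|J|}(1+\lambda)^{f(J)},
\end{equation*}
while the lower bound $Z_\lambda(G)\ge(1+\lambda)^{n/k}$ comes from Proposition~\ref{prop:hardcore-1}. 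This bounds how often $f(J)$ is small, and iterating over the $k$ classes with a threshold at level $n/(k\ln k)$ produces the $\ln(3k)$ loss.

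The main obstacle is this last step: turning the partition-function inequality into a bound on $\mathbb{E}f_j$ with only a $\ln k$ loss, uniformly in small $\lambda$. I expect to need a Markov- or Jensen-type argument of the form $\mathbb{E}(1+\lambda)^{f}$ large implies $\mathbb{E} f$ large. To make it work, I would combine it with the monotonicity of $\mu(\lambda)$ and bound the tail of $f$ through the $\lambda^{|J|}$ weights.
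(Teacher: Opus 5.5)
Your proposal has a genuine gap: it never proves the lemma. The decisive step, lower-bounding $\sum_j \mathbb{E} f_j$, is left as ``the main obstacle,'' and the tool you suggest for it points the wrong way. Jensen gives $\mathbb{E}(1+\lambda)^{f} \ge (1+\lambda)^{\mathbb{E} f}$, so an inequality saying that $\mathbb{E}(1+\lambda)^{f}$ is large cannot by itself give a lower bound on $\mathbb{E} f$; a rare large value of $f$ can make the exponential moment large.

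The class-wise reduction also makes no progress. Each $C_j$ is independent, so $f_j=|C_j\setminus N(I)|$. Hence $\sum_j f_j = |I|+|V\setminus(I\cup N(I))|$, and by \autoref{lem:hardcore-4} the bound you need on $\sum_j \mathbb{E} f_j$ is just the lemma restated. Separately, the identity $Z_{\lambda}(G)=\sum_{J}\lambda^{|J|}(1+\lambda)^{f(J)}$ is valid, but it is not \autoref{prop:hardcore-2}, which is only an upper bound.

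The missing idea is in the route you abandoned. The integral $\int_0^{\lambda}\mu(t)/t\,dt$ does not diverge, because $\mu(t)/t\to|V|$ as $t\to 0$. What diverges is $\int_0^{\lambda}dt/t$, which appears only when you apply the crude bound $\mu(t)\le\mu(\lambda)$ on all of $[0,\lambda]$. Use monotonicity only on $[\lambda/(3k),\lambda]$, and use \autoref{prop:hardcore-1} at both endpoints:
\begin{equation*}
\mu(\lambda)\ln(3k) \ge \int_{\lambda/(3k)}^{\lambda}\frac{\mu(t)}{t}\,dt = \ln\frac{Z_{\lambda}(G)}{Z_{\lambda/(3k)}(G)} \ge \frac{|V|}{k}\ln(1+\lambda)-|V|\ln\Bigl(1+\frac{\lambda}{3k}\Bigr) \ge \frac{\lambda|V|}{3k}.
\end{equation*}
The last step uses $\frac{2}{3}x\le\ln(1+x)\le x$ on $[0,1]$.

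This is exactly the paper's argument. The paper phrases the same convexity step as Jensen's inequality for $x\mapsto(3k)^{-x}$, namely $(3k)^{-\mu}\le\mathbb{E}[(3k)^{-|I|}]=Z_{\lambda/(3k)}(G)/Z_{\lambda}(G)$. That is the tangent-line inequality for the convex function $s\mapsto\ln Z_{e^{s}}(G)$.
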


\begin{proof}
    First, the function $f(x) = (\frac{1}{3k})^x$ is convex, so it follows from Jensen's inequality that:
    \begin{align*}
        \left(\frac{1}{3k}\right)^{\mathbb{E}[|I|]}
        & \leq \mathbb{E}\left[\left(\frac{1}{3k}\right)^{|I|}\right] \\
        & = \frac{1}{Z_{\lambda}(G)} \cdot \left(\sum_{I \in \mathcal{I}(G)} \left(\frac{1}{3k}\right)^{|I|} \lambda^{|I|}\right)
        = \frac{Z_{\lambda/3k}(G)}{Z_{\lambda}(G)}
        \leq \frac{(1+\lambda/3k)^{|V|}}{(1+\lambda)^{|V|/k}}
    \end{align*}
    The last inequality is due to \autoref{prop:hardcore-1}. Therefore:
    \begin{align*}
        \mathbb{E}[|I|] & \geq -\log_{3k} \frac{(1+\lambda/3k)^{|V|}}{(1+\lambda)^{|V|/k}} \\
        & = \frac{1}{\ln (3k)} \left\{\frac{|V|}{k} \ln(1+\lambda) - |V| \ln \left(1 + \frac{\lambda}{3k}\right)\right\} \\
        & \geq \frac{1}{\ln (3k)} \left\{\frac{|V|}{k} \cdot \frac{2}{3} \lambda - |V| \cdot \frac{\lambda}{3k}\right\}
        = \frac{\lambda}{3k \ln (3k)} |V|
    \end{align*}
    The last inequality follows from $\frac{2}{3} x \leq \ln (1+x) \leq x$ for all $x \in [0, 1]$.
\end{proof}

\subsection{The Neighborhood Structures}

We also investigate the structure surrounding a random independent set $I \sim \textsf{IS}_{\lambda}(G)$. In the following two lemmas, we analyze the expected size of $I \cup N(I)$.

\begin{lemma} \label{lem:hardcore-4}
    Let $\lambda > 0$, and let $G = (V, E)$ be a graph. Then,
    \begin{equation*}
        \mathbb{E}_{I \sim \textsf{IS}_{\lambda}(G)}[|I|] = \lambda \cdot  \mathbb{E}_{I \sim \textsf{IS}_{\lambda}(G)}[|V \setminus (I \cup N(I))|].
    \end{equation*}
\end{lemma}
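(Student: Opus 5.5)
The plan is a double-counting argument over pairs $(I, v)$, where $I$ is an independent set and $v \in I$. Removing $v$ from $I$ gives an independent set $I' = I \setminus \{v\}$ in which $v$ is ``free'', meaning $v \notin I' \cup N(I')$. Conversely, adding a free vertex to an independent set keeps it independent. So I will establish a weight-preserving correspondence (up to the factor $\lambda$) between
\begin{equation*}
    \mathcal{A} := \{(I, v) : I \in \mathcal{I}(G),\ v \in I\}
    \quad\text{and}\quad
    \mathcal{B} := \{(I', v) : I' \in \mathcal{I}(G),\ v \in V \setminus (I' \cup N(I'))\},
\end{equation*}
given by $(I, v) \mapsto (I \setminus \{v\}, v)$ with inverse $(I', v) \mapsto (I' \cup \{v\}, v)$.

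The steps are as follows.

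First, I check that both maps are well defined. If $v \in I$ and $I$ is independent, then $v$ has no neighbor in $I \setminus \{v\}$ and $v \notin I \setminus \{v\}$, so $v$ is free for $I \setminus \{v\}$. Conversely, if $v$ is free for $I'$, then $v \notin I'$ and $v$ has no neighbor in $I'$, so $I' \cup \{v\}$ is independent and contains $v$. The two maps are clearly mutually inverse, so the correspondence is a bijection.

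Second, I sum the weights. Under the correspondence $|I| = |I'| + 1$, so $\lambda^{|I|} = \lambda \cdot \lambda^{|I'|}$. This gives
\begin{equation*}
    \sum_{I \in \mathcal{I}(G)} |I| \, \lambda^{|I|}
    = \sum_{(I,v) \in \mathcal{A}} \lambda^{|I|}
    = \lambda \sum_{(I',v) \in \mathcal{B}} \lambda^{|I'|}
    = \lambda \sum_{I' \in \mathcal{I}(G)} |V \setminus (I' \cup N(I'))| \, \lambda^{|I'|}.
\end{equation*}

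Third, I divide both sides by $Z_{\lambda}(G)$. This yields exactly $\mathbb{E}[|I|] = \lambda \cdot \mathbb{E}[|V \setminus (I \cup N(I))|]$ under $\textsf{IS}_{\lambda}(G)$.

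There is no real obstacle here. The only point needing care is the verification that $v \notin I' \cup N(I')$ is exactly the condition for $I' \cup \{v\}$ to be an independent set containing a new vertex $v$. That is immediate from the definitions, and it holds for every $\lambda > 0$ with no colorability assumption.
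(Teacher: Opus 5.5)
Your proposal is correct and is essentially the paper's proof: the paper uses the same add/remove-$v$ bijection between $\mathcal{I}^{\mathrm{in}}_v$ and $\mathcal{I}^{\mathrm{out}}_v$ for each fixed $v$, obtains $\Pr[v \in I] = \lambda \cdot \Pr[v \notin I \cup N(I)]$, and sums over $v$ by linearity of expectation. Your version packages the same bijection over all pairs $(I, v)$ at once and sums the weights directly, which is only a difference in bookkeeping.
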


\begin{figure}[!b]
    \centering
    \includegraphics[width=\linewidth]{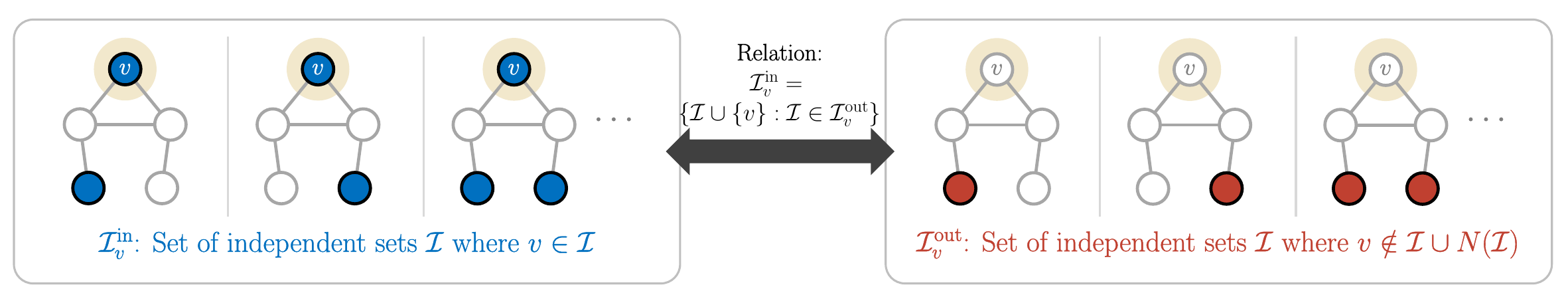}
    \caption{The definitions of $\mathcal{I}_v^{in}$ and $\mathcal{I}_v^{out}$, and the bijection between them.}
    \label{fig:302}
\end{figure}

\begin{proof}
    Fix $v \in V$. Let $\mathcal{I}^{\mathrm{in}}_v = \{I \in \mathcal{I} : v \in I\}$, and $\mathcal{I}^{\mathrm{out}}_v = \{I \in \mathcal{I} : v \in V \setminus (I \cup N(I))\}$. Then, the relationship $\mathcal{I}^{\mathrm{in}}_v = \{I \cup \{v\} : I \in \mathcal{I}^{\mathrm{out}}_v \}$ holds (\autoref{fig:302}), as shown below:

    \begin{itemize}
        \item For every $I \in \mathcal{I}^{\mathrm{out}}_v$, $I \cup \{v\} \in \mathcal{I}^{\mathrm{in}}_v$ because $I \cup \{v\}$ is an independent set due to $v \notin I \cup N(I)$.
        \item For every $I \cup \{v\}\in \mathcal{I}^{\mathrm{in}}_v$, $I \in \mathcal{I}^{\mathrm{out}}_v$ because $I$ is an independent set and $v \notin N(I)$.
    \end{itemize}

    \noindent
    Here, for each $I \in \mathcal{I}^{\mathrm{out}}_v$, the probability that $I \cup \{v\}$ is chosen is $\lambda$ times the probability that $I$ is chosen. Therefore,
    \begin{align*}
        & \mathrm{Pr}_{I \sim \mathsf{IS}_{\lambda}(G)}[I \in \mathcal{I}^{\mathrm{in}}_v] = \lambda \cdot \mathrm{Pr}_{I \sim \mathsf{IS}_{\lambda}(G)}[I \in \mathcal{I}^{\mathrm{out}}_v]  \quad (\forall v \in V) \\
        \text{Equivalently,} \ & \mathrm{Pr}_{I \sim \mathsf{IS}_{\lambda}(G)}[v \in I] = \lambda \cdot \mathrm{Pr}_{I \sim \mathsf{IS}_{\lambda}(G)}[v \in V \setminus (I \cup N(I))] \quad (\forall v \in V)
    \end{align*}
    Summing this equation over all $v \in V$, we have $\mathbb{E}_{I \sim \textsf{IS}_{\lambda}(G)}[|I|] = \lambda \cdot  \mathbb{E}_{I \sim \textsf{IS}_{\lambda}(G)}[|V \setminus (I \cup N(I))|]$ by linearity of expectation.
\end{proof}

\begin{lemma} \label{lem:hardcore-5}
    Let $\lambda \in (0, 1]$, and let $G = (V, E)$ be a $k$-colorable graph. Then,
    \begin{equation*}
        \mathbb{E}_{I \sim \textsf{IS}_{\lambda}(G)}[|I \cup N(I)|] \leq \left(1 - \frac{1}{3k \ln(3k)}\right) |V|.
    \end{equation*}
\end{lemma}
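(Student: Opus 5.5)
We combine \autoref{lem:hardcore-4} with \autoref{lem:hardcore-3}; no new sampling argument is needed. Write $U := V \setminus (I \cup N(I))$ for the set of vertices that are neither in $I$ nor adjacent to $I$. Since $I$ and $N(I)$ are disjoint (an independent set contains no neighbor of its own vertices), $V$ is partitioned into $I \cup N(I)$ and $U$. Hence
\begin{equation*}
    \mathbb{E}[|I \cup N(I)|] = |V| - \mathbb{E}[|U|],
\end{equation*}
and it suffices to show $\mathbb{E}[|U|] \geq \frac{1}{3k \ln(3k)} |V|$.

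By \autoref{lem:hardcore-4}, $\mathbb{E}[|U|] = \frac{1}{\lambda} \mathbb{E}[|I|]$. By \autoref{lem:hardcore-3}, which applies because $\lambda \in (0,1]$ and $G$ is $k$-colorable, $\mathbb{E}[|I|] \geq \frac{\lambda}{3k\ln(3k)}|V|$. Dividing by $\lambda > 0$ gives
\begin{equation*}
    \mathbb{E}[|U|] \geq \frac{1}{3k \ln(3k)} |V|.
\end{equation*}
Substituting this into the identity above yields the claimed bound.

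There is no real obstacle here. The only point to check is that the factor $\lambda$ cancels exactly, so the bound does not depend on $\lambda$. This is why the lower bound in \autoref{lem:hardcore-3} being linear in $\lambda$ matters.
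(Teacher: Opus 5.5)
Your proof is correct and matches the paper's argument: the paper also writes $\mathbb{E}[|I \cup N(I)|] = |V| - \frac{1}{\lambda}\mathbb{E}[|I|]$ via \autoref{lem:hardcore-4} and then applies \autoref{lem:hardcore-3}, so that the factor $\lambda$ cancels. One small point: the disjointness of $I$ and $N(I)$ that you invoke is not needed, since $V \setminus (I \cup N(I))$ is the complement of $I \cup N(I)$ in any case.
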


\begin{proof}
    By \autoref{lem:hardcore-4}, $\mathbb{E}[|V \setminus (I \cup N(I))|] = \frac{1}{\lambda} \cdot \mathbb{E}[|I|]$. Thus, by \autoref{lem:hardcore-3},
    \begin{align*}
        \mathbb{E}[|I \cup N(I)|]
        & = |V| - \frac{1}{\lambda} \cdot \mathbb{E}[|I|] \\
        & \leq |V| - \frac{1}{\lambda} \cdot \frac{\lambda}{3k \ln (3k)} |V|
        = \left(1 - \frac{1}{3k \ln (3k)}\right) |V|. \qedhere
    \end{align*}
\end{proof}


\section{The Algorithm} \label{sec:algorithm}

We present an algorithm that colors $k$-colorable graphs online with $O(k^3 \log^2 k \log^4 n)$ colors in the random-order model. The intuition is already given in \autoref{subsec:overview}. Note that, in the subsequent sections, we use $G = (V, E)$ to denote the current graph for the algorithm and $G^* = (V^*, E^*)$ to denote the final graph, as in \autoref{def:random-order}.

\subsection{Basic Strategy}

We consider developing an algorithm that colors the last $n/2$ vertices using $O(k^3 \log^2 k \log^3 n)$ colors, given that the first $n/2$ vertices have already arrived and have been colored. If this can be done, we can recursively apply the same algorithm to the first $n/2$ vertices. Then, we can color the entire graph with $O(k^3 \log^2 k \log^4 n)$ colors. 

However, for convenience in the analysis, we do not exactly divide them into $n/2$ vertices each. Instead, we sample $m$ from a binomial distribution $\textsf{Binomial}(n, \frac{1}{2})$, and we color the last $n-m$ vertices, given that the first $m$ vertices have already arrived. We note that $m$ is very close to $n/2$ (with $\pm \widetilde{O}(\sqrt{n})$ error), so this does not fundamentally change the algorithm. We use disjoint colors for the first $m$ vertices and the last $n-m$ vertices. Let $S$ be the set of the first $m$ vertices. Under this framework, $S \subseteq V^*$ is uniformly distributed over the $2^n$ subsets of $V^*$, as shown below.

\begin{lemma} \label{prop:random-s}
    Let $m \sim \textsf{Binomial}(n, \frac{1}{2})$. Let $S$ be the set of the first $m$ vertices. Then, each subset of $V^*$ appears as $S$ with probability exactly $2^{-n}$.
\end{lemma}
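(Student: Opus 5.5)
The plan is to fix an arbitrary subset $T \subseteq V^*$ with $|T| = t$ and compute $\Pr[S = T]$ by conditioning on the value of $m$. The key fact to use is that $m$ is sampled by the algorithm independently of the uniformly random permutation $\sigma$ from \autoref{def:random-order}. So conditioning on $m$ does not change the distribution of $\sigma$.

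First, the event $S = T$ forces $m = |T| = t$, so
\begin{equation*}
    \Pr[S = T] = \Pr[m = t] \cdot \Pr[S = T \mid m = t].
\end{equation*}

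Second, I compute the conditional probability. Given $m = t$, the permutation $\sigma$ is still uniform over all $n!$ permutations, and $S = \{v_{\sigma(1)}, \dots, v_{\sigma(t)}\}$. The event $S = T$ holds exactly when the first $t$ positions of $\sigma$ list the elements of $T$ in some order and the remaining $n - t$ positions list the rest. There are $t!\,(n-t)!$ such permutations, so
\begin{equation*}
    \Pr[S = T \mid m = t] = \frac{t!\,(n-t)!}{n!} = \binom{n}{t}^{-1}.
\end{equation*}

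Third, since $m \sim \textsf{Binomial}(n, \frac{1}{2})$, we have $\Pr[m = t] = \binom{n}{t} 2^{-n}$. Multiplying the two factors, the binomial coefficients cancel and we get $\Pr[S = T] = 2^{-n}$ for every $T$, as claimed.

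There is no real obstacle in this argument. The only point that needs care is stating explicitly that $m$ is drawn independently of the arrival order. This independence is what lets the counting over permutations go through unchanged once we condition on $m$.
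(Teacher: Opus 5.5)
Your proposal is correct and follows essentially the same route as the paper: multiply $\Pr[m = t] = \binom{n}{t}2^{-n}$ by the conditional probability $\binom{n}{t}^{-1}$ that a fixed $t$-set forms the first $t$ arrivals. You count permutations explicitly and state the independence of $m$ from $\sigma$, whereas the paper leaves that independence implicit and invokes symmetry of the random order instead.
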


\begin{proof}
    The probability that $|S| = m$ is $2^{-n} \cdot \binom{n}{m}$ by the binomial distribution. Also, there are $\binom{n}{m}$ possible subsets of $V^*$ with size $m$, and all of them have equal probabilities of appearing as $S$ because of the random-order model. Hence, \autoref{prop:random-s} holds.
\end{proof}

\subsection{Algorithm Details}

Now, we consider an algorithm to color the last $n-m$ vertices $V^* \setminus S$. The algorithm consists of two phases: (1) the \emph{learning phase}, which learns and extracts the structure of the first $m$ vertices, and (2) the \emph{output phase}, which actually colors the last $n-m$ vertices by exploiting the learned structure. The two phases are described as follows.

\paragraph{(1) The Learning Phase.} The core idea is that, if we sample an independent set $I \subseteq S$, many vertices in $V^* \setminus S$ are likely to have no neighbor in $I$ thanks to the random arrivals (\autoref{subsec:overview}). Such a set of vertices can be colored efficiently, as we prove in \autoref{sec:analysis}. However, a single sampled independent set $I$ does not cover all vertices in $V^* \setminus S$. Therefore, we need to sample multiple independent sets.

In the algorithm, we sample a sequence of independent sets $I_1, I_2, \dots$, each from a hierarchy $(S =) \ A_1 \supsetneq A_2 \supsetneq \cdots$. We start with $A_1 = S$. For $i = 1, 2, \dots$ in this order, we sample an independent set $I_i$ from $\textsf{IS}_{\lambda}(G[A_i])$, and then we let $A_{i+1} := N_{A_i}(I_i)$. See \autoref{fig:301} for an example. Eventually, $I_i$ becomes empty; we let $r$ be the first index such that $I_r = \emptyset$. When coloring the remaining $n-m$ vertices, we use independent sets $I_1, \dots, I_r$. It can be proven that $r = O(k \log k \log n)$ with high probability (\autoref{lem:rounds}), regardless of the value of the parameter $\lambda \in (0, 1]$.

\begin{figure}[t]
    \centering
    \includegraphics[width=\linewidth]{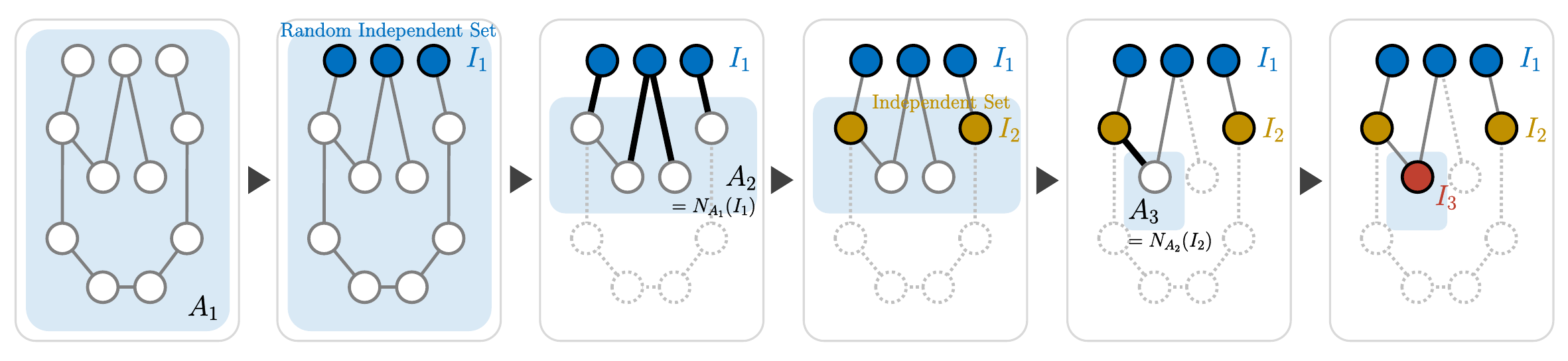}
    \caption{An example of the learning phase.}
    \label{fig:301}
\end{figure}

\paragraph{(2) The Output Phase.} Next, we color the remaining $n-m$ vertices. Here, we maintain bins $J_1, \dots, J_r \subseteq V^* \setminus S$, initially empty, which correspond to $I_1, \dots, I_r$, respectively.

When a vertex $v \in V^* \setminus S$ arrives, we put $v$ into $J_i$ for the smallest $i$ such that $N_{I_i}(v) = \emptyset$. Such an $i$ must exist because $I_r = \emptyset$. Although $J_i$ is not necessarily an independent set, it can be proven that the maximum degree of $G^*[J_i]$ is $O((k \log n) / \lambda)$ with high probability in the range of $i \leq \frac{1}{\lambda}$ (\autoref{lem:degree-prob-all}). Hence, with the parameter setting
\begin{equation*}
    \lambda := \frac{1}{6k \ln (3k) \ln n + 2},
\end{equation*}
$G^*[J_1], \dots, G^*[J_r]$ all have maximum degrees $O(k^2 \log k \log^2 n)$ with high probability (\autoref{lem:max-degree}). Therefore, we can use the First-Fit algorithm to color each vertex of $G^*[J_i]$. We use disjoint sets of colors for bins $J_1, \dots, J_r$.

\paragraph{Overall Algorithm.} The pseudocode for the entire algorithm is presented in \autoref{alg:main}. With high probability, we use $O(k^2 \log k \log^2 n)$ colors for each $J_i$, and $r = O(k \log k \log n)$. Therefore, we use $O(k^3 \log^2 k \log^3 n)$ colors in expectation for the last $n-m$ vertices (\autoref{lem:colors-part}). We analyze the algorithm's performance in detail in the next section.

\begin{algorithm}[t]
    \caption{$\textsf{COLORING}_k(n)$: An algorithm to color an $n$-vertex graph online.}
    \label{alg:main}
    \begin{algorithmic}[1]
        \State $\lambda \gets \frac{1}{6k \ln (3k) \ln n + 2}$
        \State $m \sim \textsf{Binomial}(n, \frac{1}{2})$
        \State Color the first $m$ vertices with $\textsf{COLORING}_k(m)$, if $m \geq 1$
        \State $S \gets$ the set of the first $m$ vertices
        \State $A_1 \gets S; r \gets 1$
        \Loop
            \State $I_r \gets \textsf{IS}_{\lambda}(G[A_r])$
            \If {$I_r = \emptyset$}
                \State \textbf{break}
            \EndIf
            \State $A_{r+1} \gets N_{A_r}(I_r)$
            \State $r \gets r + 1$
        \EndLoop
        \State $J_1, \dots, J_r \gets \emptyset$
        \For {each arrival of the remaining vertices $v \in V^* \setminus S$}
            \State $i \gets$ the smallest $i$ such that $N_{I_i}(v) = \emptyset$
            \State $J_i \gets J_i \cup \{v\}$, and color $v$ using First-Fit on $G[J_i]$
        \EndFor
    \end{algorithmic}
\end{algorithm}

\begin{remark}
    The runtime of \autoref{alg:main} is unbounded for the scenario where $m = n$ is chosen every time. However, if $m = n$, the only thing that happens is a call to the identical procedure $\textsf{COLORING}_k(n)$. Therefore, the outcome of \autoref{alg:main} remains the same if we choose $m$ from the distribution that excludes $m = n$, i.e., each $m = 0, \dots, n-1$ is chosen with probability $\frac{1}{2^n-1} \binom{n}{m}$; this change makes its runtime bounded. Note that $r$ is also bounded; we will prove $r \leq n+1$ in \autoref{lem:rounds-strict}. Nevertheless, this algorithm runs in exponential time, because sampling an independent set from $\textsf{IS}_{\lambda}(G[A_r])$ takes exponential time.
\end{remark}


\section{Analysis} \label{sec:analysis}

In this section, we analyze the performance of \autoref{alg:main}.

\subsection{Number of Levels}

First, we analyze the number $r$ of sampled independent sets: we prove that this is surely at most $n+1$ and is $O(k \log k \log n)$ with high probability.

\begin{lemma} \label{lem:rounds-strict}
    $r \leq n+1$.
\end{lemma}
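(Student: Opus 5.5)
We will show that the sets $A_1, A_2, \dots$ strictly decrease as long as the loop continues. The conclusion then follows by counting. The key observation is that whenever $I_i \neq \emptyset$, we have $A_{i+1} = N_{A_i}(I_i) \subseteq A_i \setminus I_i$. Two facts give this. First, $N_{A_i}(I_i) \subseteq A_i$ by definition. Second, $I_i$ is an independent set of $G[A_i]$, so no vertex of $I_i$ is adjacent to another vertex of $I_i$; hence $I_i \cap N(I_i) = \emptyset$. Since $I_i \subseteq A_i$ is nonempty, this yields $|A_{i+1}| \leq |A_i| - |I_i| \leq |A_i| - 1$. This is the strict chain $A_1 \supsetneq A_2 \supsetneq \cdots$ already mentioned in the description of the learning phase.

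Next we run the induction. We have $|A_1| = |S| = m \leq n$. For every $i < r$ the loop did not break, so $I_i \neq \emptyset$, and by induction $|A_i| \leq m - (i-1)$. Now suppose $|A_i| = 0$ for some $i$. Then $G[A_i]$ is the empty graph, whose only independent set is $\emptyset$, and $\textsf{IS}_{\lambda}$ returns $I_i = \emptyset$ with probability $1$; so the loop must stop at this $i$. Consequently $r - 1 \leq m$, since $r-1$ rounds each remove at least one vertex from a set of size $m$. Hence $r \leq m + 1 \leq n + 1$.

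There is no real obstacle here. The only point needing care is noting that $I_i \cap N_{A_i}(I_i) = \emptyset$, which holds precisely because $I_i$ is independent; we would state this explicitly, since it is what makes the decrease strict. We would also note that the bound is deterministic, holding for every outcome of the random samples, and not merely with high probability.
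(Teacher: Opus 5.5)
Your proposal is correct and follows the paper's route: a nonempty independent $I_i$ is disjoint from $N_{A_i}(I_i)$, so $A_1 \supsetneq \cdots \supsetneq A_r$, and a strictly decreasing chain starting from $|A_1| \leq n$ has at most $n+1$ members. Your remark that an empty $A_i$ forces $I_i = \emptyset$ is harmless but unnecessary, since the counting already bounds $r$ (you get the slightly sharper $r \leq m+1$).
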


\begin{proof}
    Since \autoref{alg:main} ensures that $I_1, \dots, I_{r-1} \neq \emptyset$ and $A_{i+1} = N_{A_i}(I_i)$ contains no vertex of $I_i$ for each $i$, it follows that $A_1 \supsetneq \dots \supsetneq A_r$. Since $|A_1| \leq n$, we have $r \leq n+1$.
\end{proof}

\begin{lemma} \label{lem:rounds}
    Suppose that $\lambda \in (0, 1]$. For all possible inputs of the first $m$ vertices,
    \begin{equation*}
        \mathrm{Pr}[r \leq 6k \ln (3k) \ln n + 2] \geq 1 - \frac{1}{n}.
    \end{equation*}
\end{lemma}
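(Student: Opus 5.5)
The plan is to show that the sizes $|A_i|$ shrink geometrically in expectation, and then apply Markov's inequality. Each $G[A_i]$ is an induced subgraph of the $k$-colorable graph $G^*$, so it is itself $k$-colorable. Condition on the history up to the choice of $A_i$; then $I_i \sim \textsf{IS}_{\lambda}(G[A_i])$, and we have $A_{i+1} = N_{A_i}(I_i) \subseteq (I_i \cup N(I_i)) \cap A_i$. Applying \autoref{lem:hardcore-5} to $G[A_i]$ (valid since $\lambda \in (0,1]$) gives
\begin{equation*}
    \mathbb{E}\bigl[|A_{i+1}| \bigm| A_1, \dots, A_i\bigr] \leq \left(1 - \frac{1}{3k\ln(3k)}\right)|A_i|.
\end{equation*}
To make this hold for every index, we use the convention that once the loop stops, $A_j := \emptyset$ for all $j > r$. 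The inequality is then trivial for $A_i = \emptyset$. Also, when $I_i = \emptyset$ we have $N_{A_i}(I_i)=\emptyset$, which is consistent with this convention.

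Write $\alpha := 1 - \frac{1}{3k\ln(3k)}$. Iterating the conditional bound by the tower rule gives $\mathbb{E}[|A_{j}|] \leq \alpha^{j-1}|A_1| \leq \alpha^{j-1} n$. Since $1 - x \le e^{-x}$, we get $\alpha^{j-1} \le \exp\bigl(-\frac{j-1}{3k\ln(3k)}\bigr)$.

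Now relate $r$ to the sets $A_j$. If $r > j$, then $I_1, \dots, I_j$ are all nonempty, which means $A_{j}$ was sampled from and yielded a nonempty independent set, so $A_j \neq \emptyset$. In fact $A_{j+1} \ne \emptyset$ as well, but we need only one of these. Hence $\Pr[r > j] \le \Pr[|A_j| \ge 1] \le \mathbb{E}[|A_j|]$ by Markov's inequality. Choose $j$ with $j - 1 \ge 6k\ln(3k)\ln n$, for instance $j = \lfloor 6k\ln(3k)\ln n\rfloor + 2$. This gives
\begin{equation*}
    \Pr[r > j] \le n \cdot e^{-2\ln n} = \frac{1}{n},
\end{equation*}
which leaves a factor of $n$ to spare. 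Using the spare factor, we can instead take $j-1 \ge 3k\ln(3k)\cdot 2\ln n$ exactly and absorb the floor issue. Either way, we need to check that the integer threshold comes out as $r \le 6k\ln(3k)\ln n + 2$. Since $r$ is an integer, $r \le \lfloor 6k\ln(3k)\ln n\rfloor + 2$ is equivalent to $r \le 6k\ln(3k)\ln n + 2$.

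The argument holds for every fixed input of the first $m$ vertices, because the only randomness used is the hardcore sampling and \autoref{lem:hardcore-5} is uniform over $k$-colorable graphs. The main point requiring care is the conditioning and indexing: $A_{i+1}$ must be defined as $\emptyset$ after termination so that the supermartingale-type bound holds for every $i$, and the off-by-one between $r$ and the last nonempty $A_j$ must be handled correctly. The latter has ample slack, since we only need exponent $2\ln n$ to get $1/n$.
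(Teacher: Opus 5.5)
Your proof is correct and is essentially the paper's argument. \autoref{lem:hardcore-5} applied to $G[A_i]$ gives geometric decay of $\mathbb{E}[|A_i|]$. Then Markov's inequality on $|A_j|$ with $j-1 \geq 6k\ln(3k)\ln n$, together with $r>j \Rightarrow I_j\neq\emptyset \Rightarrow A_j\neq\emptyset$, yields $\Pr[r>j]\leq 1/n$.

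Two side remarks are wrong and should be dropped, though the proof does not use them. First, $r>j$ does not force $A_{j+1}\neq\emptyset$: if $I_j$ has no neighbours in $A_j$, then $A_{j+1}=\emptyset$ and $r=j+1$. Second, there is no spare factor of $n$, since $n\cdot e^{-2\ln n}=1/n$ is exactly the target bound.
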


\begin{proof}
    Fix an arbitrary realization of $A_i$. By applying \autoref{lem:hardcore-5} to $G[A_i]$, we have
    \begin{align*}
        \mathbb{E}_{I \sim \textsf{IS}_{\lambda}(G[A_i])} [|A_{i+1}|]
        & = \mathbb{E}_{I \sim \textsf{IS}_{\lambda}(G[A_i])} [|N_{A_i}(I)|] \\
        & \leq \mathbb{E}_{I \sim \textsf{IS}_{\lambda}(G[A_i])} [|I \cup N_{A_i}(I)|] \\
        & \leq \left(1 - \frac{1}{3k \ln (3k)}\right) |A_i|.
    \end{align*}
    Taking expectations over $A_i$, we obtain $\mathbb{E}[|A_{i+1}|] \leq \left(1 - \frac{1}{3k \ln (3k)}\right) \mathbb{E}[|A_i|]$. Since $|A_1| \leq n$,
    \begin{equation*}
        \mathbb{E}[|A_{i+1}|] \leq \left(1 - \frac{1}{3k \ln (3k)}\right)^i \cdot n.
    \end{equation*}
    Since $(1-\frac{1}{x})^x < \frac{1}{e}$ for all $x \geq 1$, we have $\mathbb{E}[|A_{i^*+1}|] \leq \frac{1}{n}$ with $i^* := \lceil 3k \ln(3k) \cdot 2 \ln n \rceil$. Since $|A_{i^*+1}|$ is a non-negative integer, Markov's inequality implies that $A_{i^*+1}$ is empty with probability at least $1 - \frac{1}{n}$. Moreover, if $A_{i^*+1}$ is empty, then $I_{i^*+1}$ is also empty, meaning that $r \leq i^*+1$. Therefore, $r \leq 6k \ln (3k) \ln n + 2$ with probability at least $1 - \frac{1}{n}$.
\end{proof}

\subsection{Maximum Degrees}

Next, we analyze the maximum degrees of the subgraphs $G^*[J_i]$. Specifically, we show that the maximum degrees of $G^*[J_1], \dots, G^*[J_r]$ are $O(k^2 \log k \log^2 n)$ with high probability.

\subsubsection{Introducing the History Sequence}

Here, for a given graph $G^* = (V^*, E^*)$, we need to analyze all possibilities for the first $m$ vertices and the sampled independent sets. The outcome of \autoref{alg:main} in terms of coloring the last $n-m$ vertices, i.e., $J_1, \dots, J_r$, is determined by $(A_1, I_1, \dots, A_r, I_r)$. To this end, we define \emph{history sequences} as follows.

\begin{definition}[History Sequence]
    Let $t \geq 1$ be an integer. Given a graph $G^* = (V^*, E^*)$, the ``history sequence up to level $t$'' is defined as a sequence $\mathcal{H} := (A_1, I_1, \dots, A_t, I_t)$. This is defined even when $r < t$; in this case, $A_i = I_i = \emptyset$ when $i > r$. The set of all possible history sequences up to level $t$ is denoted as
    \[
    \begin{aligned}
        \mathbb{H}_t := \{(A_1,I_1,\ldots,A_t,I_t):\;&
        A_1 \subseteq V^*,\
        I_i \in \mathcal{I}(G^*[A_i])\ (1 \le i \le t),\\
        & A_{i+1}=N_{A_i}(I_i)\ (1 \le i < t)\}.
    \end{aligned}
    \]
    It is easy to see that $\mathbb{H}_t$ indeed enumerates all history sequences with positive probability.
\end{definition}

\begin{proposition} \label{prop:history}
    Let $t \geq 1$ be an integer. The probability that the history sequence up to level $t$ equals $\mathcal{H} = (A_1, I_1, \dots, A_t, I_t) \in \mathbb{H}_t$ is given by
    \begin{equation*}
        p(\mathcal{H}) := 2^{-n} \cdot \prod_{i=1}^t \frac{\lambda^{|I_i|}}{Z_{\lambda}(G^*[A_i])}.
    \end{equation*}
\end{proposition}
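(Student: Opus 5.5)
Argue by induction on $t$, using the chain rule for the sequential sampling process in \autoref{alg:main}.

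For $t = 1$, the history is $(A_1, I_1)$ with $A_1 = S$. By \autoref{prop:random-s}, $S = A_1$ with probability exactly $2^{-n}$. Conditioned on $S = A_1$, the algorithm draws $I_1$ from $\textsf{IS}_{\lambda}(G^*[A_1])$. Here $G[A_1] = G^*[A_1]$, since every vertex of $A_1$ has already arrived together with all edges among them. Hence $I_1$ appears with probability $\lambda^{|I_1|}/Z_{\lambda}(G^*[A_1])$, and multiplying gives $p(\mathcal{H})$ for $t=1$. One point needs care: the sampling of $I_1$ depends only on $G[S]$ and fresh randomness, not on the arrival order within $S$ or on the recursive coloring of the first $m$ vertices. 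That recursive call uses independent randomness and does not affect the learning phase. So conditioning on the event $S = A_1$ alone suffices.

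For the inductive step, take $\mathcal{H} = (A_1, I_1, \dots, A_{t+1}, I_{t+1}) \in \mathbb{H}_{t+1}$, and let $\mathcal{H}'$ be its truncation to level $t$. Clearly $\mathcal{H}' \in \mathbb{H}_t$, so by induction it occurs with probability $p(\mathcal{H}')$. Given $\mathcal{H}'$, the set $A_{t+1} = N_{A_t}(I_t)$ is determined, and membership in $\mathbb{H}_{t+1}$ guarantees it equals the prescribed $A_{t+1}$. There are two cases.

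\emph{Case $I_t \neq \emptyset$, so the loop has not stopped.} The algorithm samples $I_{t+1} \sim \textsf{IS}_{\lambda}(G^*[A_{t+1}])$ with fresh randomness, independent of everything before. The conditional probability is therefore $\lambda^{|I_{t+1}|}/Z_{\lambda}(G^*[A_{t+1}])$.

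\emph{Case $I_t = \emptyset$, so the algorithm has stopped.} The convention forces $A_{t+1} = I_{t+1} = \emptyset$. This agrees with $N_{A_t}(\emptyset) = \emptyset$, and $I_{t+1} \in \mathcal{I}(G^*[\emptyset])$ forces $I_{t+1} = \emptyset$. The conditional probability is therefore $1$. This also matches the formula's factor, since $Z_{\lambda}(G^*[\emptyset]) = 1$ and $\lambda^0 = 1$.

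In both cases $p(\mathcal{H}) = p(\mathcal{H}') \cdot \lambda^{|I_{t+1}|}/Z_{\lambda}(G^*[A_{t+1}])$, which completes the induction.

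The main subtlety is bookkeeping rather than computation. It lies in the stopping convention (the case $I_t = \emptyset$ with the empty-graph partition function equal to $1$) and in justifying independence of the samples from the arrival order inside $S$. The remaining steps are routine applications of the chain rule.
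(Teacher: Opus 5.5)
Your proof is correct and follows the same route as the paper. The probability factors as $\Pr[S = A_1] = 2^{-n}$ times the hardcore probabilities $\lambda^{|I_i|}/Z_{\lambda}(G^*[A_i])$ at each level, with a factor of $1$ once the loop has stopped; your induction simply makes this one-line chain-rule argument explicit, including the stopping convention via $Z_{\lambda}(G^*[\emptyset]) = 1$.
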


\begin{proof}
    The probability that $S = A_1$ is chosen is $2^{-n}$ by \autoref{prop:random-s}. The probability that $I_i$ is chosen from $\mathsf{IS}_{\lambda}(G^*[A_i])$ is $\frac{\lambda^{|I_i|}}{Z_{\lambda}(G^*[A_i])}$ by the definition of the hardcore model (note that this probability is $1$ for $i > r$). Thus, \autoref{prop:history} holds.
\end{proof}

\subsubsection{Degree Analysis with Double-Counting Method}

To establish the maximum degree bound, we prove the following lemma using the double-counting method on history sequences.

\begin{lemma} \label{lem:magic}
    Let $\lambda > 0$ be the parameter of the hardcore model. Suppose $G^* = (V^*, E^*)$ is a fixed $k$-colorable graph. For every vertex $v \in V^*$ and every integer $t \geq 1$,
    \begin{equation*}
        \mathbb{E}\left[\text{$\left\{1 + \frac{\lambda}{(1+\lambda)^t}\right\}^{|N(v) \cap J_t|/k}$ if $v \in J_t$, otherwise $0$}\right] \leq 1.
    \end{equation*}
\end{lemma}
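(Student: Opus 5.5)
The plan is a double-counting (injection) argument on history sequences up to level $t$. By \autoref{prop:history}, $\sum_{\mathcal{H} \in \mathbb{H}_t} p(\mathcal{H}) = 1$. The outcome ($S = A_1$, the bins, hence $J_t$) is determined by $\mathcal{H}$, so the expectation in the statement equals $\sum_{\mathcal{H}} p(\mathcal{H}) \cdot f(\mathcal{H})$, where $f(\mathcal{H})$ is the quantity inside the expectation. It therefore suffices to map each history $\mathcal{H}$ with $v \in J_t$ to a family of \emph{distinct} histories in $\mathbb{H}_t$ whose total weight is at least $p(\mathcal{H}) \cdot f(\mathcal{H})$. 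The families for different $\mathcal{H}$ must also be pairwise disjoint, so that summing over $\mathcal{H}$ gives at most $1$.

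\textbf{Construction.} Fix $\mathcal{H} = (A_1, I_1, \dots, A_t, I_t)$ with $v \in J_t$. Then $t \le r$, since otherwise $J_t = \emptyset$. Let $W := N(v) \cap J_t$.
\begin{itemize}
\item Fix once and for all a proper $k$-coloring of $G^*$. Let $W' \subseteq W$ be a largest color class within $W$; then $W'$ is independent and $|W'| \ge |W|/k$.
\item Each $u \in W'$ satisfies $u \notin S$. Since $u \in J_t$ (bin chosen by smallest index), $u$ has a neighbor in each of $I_1, \dots, I_{t-1}$ and no neighbor in $I_t$.
\item For each $X \subseteq W'$, define $\mathcal{H}_X$ by $A_i' := A_i \cup X$ for $1 \le i \le t$, $I_i' := I_i$ for $i < t$, and $I_t' := I_t \cup X$.
\end{itemize}
I will check that $\mathcal{H}_X \in \mathbb{H}_t$:
\begin{itemize}
\item $I_t \cup X$ is independent in $G^*[A_t \cup X]$, because $X$ is independent and has no neighbor in $I_t$.
\item For $i < t$, we have $N_{A_i \cup X}(I_i) = A_{i+1} \cup X$, because every $u \in X$ is adjacent to $I_i$.
\end{itemize}

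\textbf{Weights.} By \autoref{prop:history} and \autoref{prop:hardcore-2} applied to $Z_\lambda(G^*[A_i \cup X])$ for each of the $t$ levels,
\begin{equation*}
p(\mathcal{H}_X) = p(\mathcal{H}) \cdot \lambda^{|X|} \prod_{i=1}^{t} \frac{Z_\lambda(G^*[A_i])}{Z_\lambda(G^*[A_i \cup X])} \ge p(\mathcal{H}) \left(\frac{\lambda}{(1+\lambda)^t}\right)^{|X|}.
\end{equation*}
Summing over all $X \subseteq W'$ with the binomial theorem gives
\begin{equation*}
\sum_{X \subseteq W'} p(\mathcal{H}_X) \ge p(\mathcal{H}) \left(1 + \frac{\lambda}{(1+\lambda)^t}\right)^{|W'|} \ge p(\mathcal{H}) \left(1 + \frac{\lambda}{(1+\lambda)^t}\right)^{|W|/k},
\end{equation*}
which is exactly $p(\mathcal{H}) \cdot f(\mathcal{H})$.

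\textbf{Injectivity (main obstacle).} We must recover $(\mathcal{H}, X)$ from $\mathcal{H}_X$, with $v$ fixed. The key point is that $v \in J_t$ forces $N(v) \cap I_t = \emptyset$, while $X \subseteq N(v)$. Hence $X = I_t' \cap N(v)$, and then $\mathcal{H}$ is obtained by deleting $X$ from $A_1', \dots, A_t'$ and from $I_t'$.

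So the pairs $(\mathcal{H}, X)$ map injectively into $\mathbb{H}_t$, and the total weight is at most $\sum_{\mathcal{H}' \in \mathbb{H}_t} p(\mathcal{H}') = 1$. Two details remain to check carefully:
\begin{itemize}
\item Choosing $W'$ by a fixed rule from $\mathcal{H}$ (largest class of the fixed coloring, with ties broken canonically) causes no problem, since $\mathcal{H}$ is recovered before $W'$ is needed.
\item The convention $A_i = I_i = \emptyset$ for $i > r$ does not interfere, since $t \le r$ whenever $v \in J_t$.
\end{itemize}
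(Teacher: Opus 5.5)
Your proposal is correct and follows the paper's proof: you use the same injection $\mathcal{H} \mapsto (A_1 \cup X, I_1, \dots, A_t \cup X, I_t \cup X)$, the same validity checks, the same weight bound via \autoref{prop:hardcore-2} at each of the $t$ levels, and the same recovery $X = I'_t \cap N(v)$. The only difference is cosmetic: you take the largest class of a fixed $k$-coloring inside $N(v) \cap J_t$, whereas the paper takes a maximum independent set of $G^*[N(v) \cap J_t]$, and both give an independent set of size at least $|N(v) \cap J_t|/k$.
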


\begin{proof}
    Let $\mathbb{H}' \subseteq \mathbb{H}_t$ be the set of all history sequences up to level $t$ in which the vertex $v$ is put into $J_t$. For each $\mathcal{H} = (A_1, I_1, \dots, A_t, I_t) \in \mathbb{H}'$, define the following:
    
    \begin{itemize}
        \item Let $X(\mathcal{H}) \subseteq V^*$ be the set of vertices adjacent to $v$ in $G^*[J_t]$.
        \item Let $Y(\mathcal{H}) \subseteq V^*$ be an arbitrary maximum independent set of $G^*[X(\mathcal{H})]$.
    \end{itemize}

    Since $G^*[X(\mathcal{H})]$ is $k$-colorable, we have $|Y(\mathcal{H})| \geq |X(\mathcal{H})| / k$. Moreover, each subset $W \subseteq Y(\mathcal{H})$ is also an independent set (see \autoref{fig:502}). Therefore, we can write down $2^{|Y(\mathcal{H})|}$ history sequences as follows (we add $W$ to each of $A_1, \dots, A_t$ and to $I_t$):
    \begin{equation*}
        \text{$F(\mathcal{H}, W) := (A_1 \cup W, I_1, \dots, A_{t-1} \cup W, I_{t-1}, A_t \cup W, I_t \cup W)$, for each $W \subseteq Y(\mathcal{H})$}.
    \end{equation*}
    The sequence $F(\mathcal{H}, W)$ is indeed a valid history sequence up to level $t$. This can be shown as follows. First, for every $i < t$, $I_i$ is an independent set of $G^*[A_i \cup W]$, because $I_i$ is an independent set of $G^*[A_i]$. Second, $A_{i+1} \cup W = N_{A_i \cup W}(I_i)$ holds, because if some vertex $w \in W$ has no neighbor in $I_i$, then $w$ should have been put into bin $J_i$ instead of $J_t$. Third, $I_t \cup W$ is an independent set because $W \subseteq Y(\mathcal{H})$ is an independent set and every vertex in $W \subseteq J_t$ has no neighbor in $I_t$.
    
\begin{figure}[t]
    \centering
    \includegraphics[width=\linewidth]{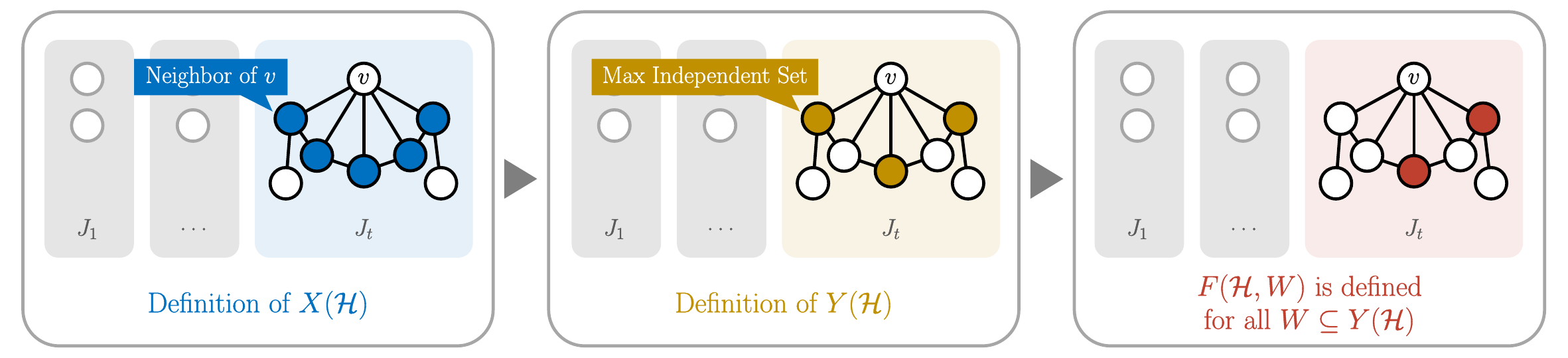}
    \caption{An illustration of the sets $X(\mathcal{H})$, $Y(\mathcal{H})$, and $W \subseteq Y(\mathcal{H})$ used to construct a new history sequence $F(\mathcal{H}, W)$.}
    \label{fig:502}
\end{figure}

    Next, the constructed histories $F(\mathcal{H}, W)$, over all $\mathcal{H} \in \mathbb{H}'$ and $W \subseteq Y(\mathcal{H})$, are pairwise distinct. This is because, from a history sequence $F(\mathcal{H}, W) = (A'_1, I'_1, \dots, A'_t, I'_t)$, we can uniquely recover $W$ as $I'_t \cap N(v)$, and hence we can also uniquely recover the original history sequence $\mathcal{H}$. The reason for $W = I'_t \cap N(v)$ is that, for $I'_t = I_t \cup W$, each vertex in $I_t$ is not adjacent to $v$ (due to $v \in J_t$), but each vertex in $W$ is adjacent to $v$ (due to $W \subseteq X(\mathcal{H})$). Therefore, the total probability of all history sequences $F(\mathcal{H}, W)$ is at most $1$:
    \begin{equation} \label{eq:prob-sum-1}
        \sum_{\mathcal{H} \in \mathbb{H}'} \sum_{W \subseteq Y(\mathcal{H})} p(F(\mathcal{H}, W)) \leq 1.
    \end{equation}

    On the other hand, we also have a lower bound on the probability $p(F(\mathcal{H}, W))$ in terms of $p(\mathcal{H})$. This is calculated as follows:
    \begin{align*}
        p(F(\mathcal{H}, W)) & = p(\mathcal{H}) \cdot \frac{\lambda^{|W|}}{\prod_{i=1}^t \{Z_{\lambda}(G^*[A_i \cup W]) / Z_{\lambda}(G^*[A_i])\}} \\
        & \geq p(\mathcal{H}) \cdot \frac{\lambda^{|W|}}{\{(1+\lambda)^{|W|}\}^t} \\
        & = p(\mathcal{H}) \cdot \left\{\frac{\lambda}{(1+\lambda)^t}\right\}^{|W|}.
    \end{align*}
    The first equality is derived from \autoref{prop:history}, and the inequality is derived from \autoref{prop:hardcore-2}. Therefore, by the binomial theorem, we obtain
    \begin{equation} \label{eq:prob-sum-2}
        \sum_{W \subseteq Y(\mathcal{H})} p(F(\mathcal{H}, W)) \geq p(\mathcal{H}) \cdot \left\{1 + \frac{\lambda}{(1+\lambda)^t}\right\}^{|Y(\mathcal{H})|}.
    \end{equation}
    Combining \autoref{eq:prob-sum-1}, \autoref{eq:prob-sum-2}, and $|Y(\mathcal{H})| \geq |X(\mathcal{H})| / k$, we obtain
    \begin{equation*}
        \sum_{\mathcal{H} \in \mathbb{H}'} \left\{p(\mathcal{H}) \cdot \left\{1 + \frac{\lambda}{(1+\lambda)^t}\right\}^{|X(\mathcal{H})| / k}\right\} \leq 1.
    \end{equation*}
    \autoref{lem:magic} follows directly from this inequality.
\end{proof}

\subsubsection{Deriving the Degree Bound}

\begin{lemma} \label{lem:degree-prob}
    Let $\lambda \in (0, 1]$ be the parameter of the hardcore model. Suppose $G^* = (V^*, E^*)$ is a fixed $k$-colorable graph. For every vertex $v \in V^*$ and every positive integer $t \leq \frac{1}{\lambda}$,
    \begin{equation} \label{eq:degree-prob-0}
        \mathrm{Pr}\left[\text{$v \in J_t$ and $|N(v) \cap J_t| \geq \frac{10k \ln n}{\lambda}$}\right] \leq \frac{1}{n^3}.
    \end{equation}
\end{lemma}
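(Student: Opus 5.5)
This follows from \autoref{lem:magic} by Markov's inequality. Write $\beta := \frac{\lambda}{(1+\lambda)^t}$ and let $Z$ be the random variable equal to $(1+\beta)^{|N(v)\cap J_t|/k}$ if $v \in J_t$ and $0$ otherwise. By \autoref{lem:magic}, $\mathbb{E}[Z] \le 1$. On the event in \eqref{eq:degree-prob-0} we have $Z \ge (1+\beta)^{D/k}$ with $D := \frac{10k\ln n}{\lambda}$. Markov's inequality therefore bounds the probability of that event by $(1+\beta)^{-D/k} = (1+\beta)^{-10\ln n/\lambda}$. It remains to show that this quantity is at most $n^{-3}$, i.e.\ that
\begin{equation*}
  \frac{10\ln n}{\lambda}\,\ln(1+\beta) \ge 3\ln n .
\end{equation*}

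The main point is a lower bound on $\beta$ using $t \le \frac{1}{\lambda}$. Since $1+\lambda \le e^{\lambda}$, we get $(1+\lambda)^t \le e^{\lambda t} \le e$, hence $\beta \ge \lambda/e$. Also $\beta \le \lambda \le 1$, so $\ln(1+\beta) \ge \frac{2}{3}\beta \ge \frac{2\lambda}{3e}$, by the same inequality $\ln(1+x)\ge \frac23 x$ on $[0,1]$ used in \autoref{lem:hardcore-3}. Hence
\begin{equation*}
  \frac{10\ln n}{\lambda}\,\ln(1+\beta) \ge \frac{20}{3e}\ln n \approx 2.45\ln n .
\end{equation*}
This falls slightly short of $3\ln n$, so the constants need tightening.

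A sharper bound on $\ln(1+\beta)$ fixes this. Use $\ln(1+x) \ge x - x^2/2$, or $\ln(1+x)\ge \frac{x}{1+x}$, together with $\beta \le \lambda$. With the second bound, $\ln(1+\beta) \ge \frac{\beta}{1+\beta} \ge \frac{\lambda/e}{1+\lambda/e}$, which is at least $\frac{\lambda}{e+1}$ whenever $\lambda\le 1$. That gives only $\frac{10}{e+1} \approx 2.69$, still not enough, so the loss must be recovered elsewhere. The convexity loss can be avoided because $\ln(1+x)/x$ is decreasing: for $\beta\le\lambda$ we have $\ln(1+\beta)\ge \beta\cdot\frac{\ln(1+\lambda)}{\lambda}$. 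Moreover $(1+\lambda)^t \le (1+\lambda)^{1/\lambda}$. Combining these,
\begin{equation*}
  \frac{\ln(1+\beta)}{\lambda} \ge \frac{\ln(1+\lambda)}{\lambda\,(1+\lambda)^{1/\lambda}} =: g(\lambda).
\end{equation*}
Now check that $10\,g(\lambda)\ge 3$ on $(0,1]$. As $\lambda\to 0$, $g(\lambda)\to 1/e \approx 0.368$, giving $10g \approx 3.68$. At $\lambda=1$, $g(1) = \ln 2/2 \approx 0.347$, giving $10g \approx 3.47$. The function $g$ is monotone decreasing on this range, since both $\frac{\ln(1+\lambda)}{\lambda}$ and $(1+\lambda)^{-1/\lambda}$ decrease. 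Hence $10\,g(\lambda) \ge 10\ln 2/2 > 3$ throughout, and the probability is at most $n^{-3}$.

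The only delicate step is this constant check. Everything else is a direct application of \autoref{lem:magic} and Markov's inequality.
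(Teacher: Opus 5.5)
Your overall argument is the same as the paper's. You apply Markov's inequality to the random variable of \autoref{lem:magic}, use $t\le 1/\lambda$ to lower-bound $\beta=\lambda/(1+\lambda)^t$, and reduce to the numerical check $10\ln(1+\beta)/\lambda\ge 3$. The chain $\ln(1+\beta)/\lambda\ge g(\lambda)$ is fine.

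\textbf{The gap.} Your justification that $g$ is decreasing is false. Write $h(\lambda):=\ln(1+\lambda)/\lambda$, which is decreasing. Then $(1+\lambda)^{1/\lambda}=e^{h(\lambda)}$ decreases from $e$ to $2$, so $(1+\lambda)^{-1/\lambda}$ \emph{increases} from $1/e$ to $1/2$ on $(0,1]$. Thus $g$ is a product of a decreasing factor and an increasing factor. This is not cosmetic: bounding the two factors separately only gives $g\ge \ln 2\cdot e^{-1}\approx 0.255$, i.e.\ $10g\ge 2.55$, which is again too weak.

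\textbf{The fix.} The claim itself is true. Write $g(\lambda)=h(\lambda)e^{-h(\lambda)}$. Since $h$ is decreasing with values in $[\ln 2,1)$, and $u\mapsto ue^{-u}$ is increasing on $[0,1]$, $g$ is decreasing. Hence $g(\lambda)\ge g(1)=\ln 2/2$, giving $10g\ge 5\ln 2\approx 3.47>3$.

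\textbf{The paper's route.} The paper avoids this detour. After $\beta\ge\lambda/e$ (your first step), it uses $\ln(1+\beta)\ge\ln(1+\lambda/e)\ge \lambda\ln(1+1/e)\ge \frac{3}{10}\lambda$. The middle inequality is concavity of $x\mapsto\ln(1+x/e)$ on $[0,1]$ (the chord from $0$ to $1$), and the last uses $\ln(1+1/e)\approx 0.313$. Your first attempt lost the constant because you used the bound $\ln(1+x)\ge\frac23x$, which is calibrated to all of $[0,1]$. The relevant argument $\lambda/e$ lies in $[0,1/e]$, where the chord slope is $e\ln(1+1/e)\approx 0.85$ rather than $2/3$. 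With that chord, the straightforward bound already suffices.
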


\begin{proof}
    First, for every positive integer $t \leq \frac{1}{\lambda}$, we have
    \begin{equation*}
        \frac{\lambda}{(1+\lambda)^t} \geq \frac{\lambda}{(1+\lambda)^{1/\lambda}} \geq \frac{\lambda}{e}.
    \end{equation*}
    Therefore, \autoref{lem:magic} implies that the following inequality also holds for $t \leq \frac{1}{\lambda}$:
    \begin{equation} \label{eq:degree-prob-1}
        \mathbb{E}\left[\text{$(1 + \lambda/e)^{|N(v) \cap J_t|/k}$ if $v \in J_t$, otherwise $0$}\right] \leq 1.
    \end{equation}
    By Markov's inequality, \autoref{eq:degree-prob-1} implies $\Pr[v \in J_t \ \text{and} \ (1 + \lambda/e)^{|N(v) \cap J_t|/k} \geq n^3] \leq \frac{1}{n^3}$. By taking the logarithm of both sides, the condition above becomes
    \begin{equation} \label{eq:degree-prob-2}
        |N(v) \cap J_t| \geq \frac{3k \ln n}{\ln(1 + \lambda/e)}.
    \end{equation}
    Here,
    \begin{equation} \label{eq:degree-prob-3}
        \frac{3k \ln n}{\ln(1 + \lambda/e)} \leq \frac{3k \ln n}{(3/10) \cdot \lambda} = \frac{10k \ln n}{\lambda}.
    \end{equation}
    This is due to $\ln(1+x/e) \geq \frac{3}{10} x$ for all $x \in [0, 1]$. Therefore, combining \autoref{eq:degree-prob-2} and \autoref{eq:degree-prob-3}, the probability that $v \in J_t$ and $|N(v) \cap J_t| \geq \frac{10k \ln n}{\lambda}$ is at most $\frac{1}{n^3}$.
\end{proof}

\begin{lemma} \label{lem:degree-prob-all}
    Let $\lambda \in (0, 1]$ be the parameter of the hardcore model, and let $t \leq \frac{1}{\lambda}$ be a positive integer. Suppose $G^* = (V^*, E^*)$ is a fixed $k$-colorable graph. The probability that the maximum degree of $G^*[J_t]$ exceeds $\frac{10 k \ln n}{\lambda}$ is at most $\frac{1}{n^2}$.
\end{lemma}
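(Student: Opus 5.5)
This is a union bound over the vertices of $V^*$, applied to \autoref{lem:degree-prob}. The maximum degree of $G^*[J_t]$ exceeds $\frac{10k\ln n}{\lambda}$ only if some vertex $v \in J_t$ has more than that many neighbors inside $J_t$. Since $G^*[J_t]$ is an induced subgraph, the degree of $v$ there is exactly $|N(v) \cap J_t|$, where $N$ is taken in $G^*$. So the bad event is contained in
\begin{equation*}
    \bigcup_{v \in V^*} \left\{ v \in J_t \ \text{and}\ |N(v) \cap J_t| \geq \frac{10k\ln n}{\lambda} \right\}.
\end{equation*}
If the maximum degree is strictly larger than the threshold, it is in particular at least the threshold, so the non-strict inequality in \autoref{lem:degree-prob} covers this case.

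Next, fix $t \leq \frac{1}{\lambda}$ and $\lambda \in (0,1]$, so that the hypotheses of \autoref{lem:degree-prob} hold for every $v \in V^*$. Each event in the union then has probability at most $\frac{1}{n^3}$. There are $|V^*| = n$ vertices, so the union bound gives a total probability of at most $n \cdot \frac{1}{n^3} = \frac{1}{n^2}$.

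The only point needing care is that the probability space matches that of \autoref{lem:degree-prob}. That space covers the random choice of $S$, the hardcore samples, and the resulting bins, all for the same fixed $G^*$, and it includes histories with $r < t$, in which case $J_t = \emptyset$ and no vertex lies in $J_t$. Since the same space is used here, no further obstacle arises and the argument is a direct union bound.
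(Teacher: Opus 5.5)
Your proposal is correct and is essentially the paper's own proof. You cover the bad event by the union over $v \in V^*$ of the events in \autoref{lem:degree-prob}, then apply the union bound to get $n \cdot \frac{1}{n^3} = \frac{1}{n^2}$.
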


\begin{proof}
    The maximum degree of $G^*[J_t]$ exceeds $\frac{10 k \ln n}{\lambda}$ only if the event of \autoref{eq:degree-prob-0} occurs for at least one $v \in V^*$. Hence, by \autoref{lem:degree-prob} and the union bound, this probability is at most $\frac{1}{n^3} \cdot |V^*| = \frac{1}{n^2}$.
\end{proof}

\begin{lemma} \label{lem:max-degree}
    With $\lambda := \frac{1}{6k \ln(3k) \ln n + 2}$, for every fixed $k$-colorable graph $G^* = (V^*, E^*)$, $G^*[J_1], \dots, G^*[J_r]$ all have maximum degrees at most $10k \ln n \cdot (6k \ln (3k) \ln n + 2)$ with probability at least $1 - \frac{3}{n}$.
\end{lemma}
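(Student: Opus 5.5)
The plan is to combine \autoref{lem:rounds} with \autoref{lem:degree-prob-all} through a union bound. Set $T := 6k \ln(3k) \ln n + 2$, so that $\lambda = 1/T$ and therefore $\lambda \in (0,1]$. With this choice, the threshold $\frac{10k \ln n}{\lambda}$ in \autoref{lem:degree-prob-all} is exactly $10k \ln n \cdot T$, which is the bound in the statement. Moreover, every positive integer $t \leq T$ satisfies $t \leq \frac{1}{\lambda}$, so \autoref{lem:degree-prob-all} applies to each level $t = 1, \dots, \lfloor T \rfloor$.

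There are two bad events to control. The first is $\mathcal{E}_0 := \{r > T\}$. Since \autoref{lem:rounds} holds for every realization of the first $m$ vertices, it also holds after averaging over $S$, giving $\Pr[\mathcal{E}_0] \leq \frac{1}{n}$. For each integer $1 \leq t \leq \lfloor T \rfloor$, the second type of event is $\mathcal{E}_t := \{\text{the maximum degree of } G^*[J_t] \text{ exceeds } 10k\ln n \cdot T\}$, and \autoref{lem:degree-prob-all} gives $\Pr[\mathcal{E}_t] \leq \frac{1}{n^2}$. Bins with index $t > r$ are empty by the convention on history sequences, so they need no separate treatment.

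Suppose none of these events occurs. Then $r \leq T$, so every bin $J_1, \dots, J_r$ has index at most $\lfloor T \rfloor$. Each of them is therefore covered by some $\mathcal{E}_t$ that did not occur, and all of them have maximum degree at most $10k \ln n \cdot T$. The union bound then gives a failure probability of at most
\[
\frac{1}{n} + \lfloor T \rfloor \cdot \frac{1}{n^2}.
\]

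The main obstacle is showing that $\lfloor T \rfloor \cdot \frac{1}{n^2} \leq \frac{2}{n}$, which amounts to $T \leq 2n$, i.e.\ $6k\ln(3k)\ln n + 2 \leq 2n$. This holds when $k$ is small relative to $n$. For large $k$, I would argue that we may assume $k \leq n$, since $k$-colorability with $k \geq n$ gives no extra information and we can replace $k$ by $\min(k,n)$. Even then the inequality needs checking. When it fails, i.e.\ when $T > 2n$, we have $\frac{10k\ln n}{\lambda} \geq n$. Every degree is trivially below $n$, so the degree claim holds with probability $1$ and the bound is vacuous. With that case split, the total failure probability is at most $\frac{3}{n}$ in all cases, as required.
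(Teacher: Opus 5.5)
Your proof is correct and has the same structure as the paper's: a union bound combining \autoref{lem:rounds} with \autoref{lem:degree-prob-all} over the levels $t \le T$. The only difference is how the number of union-bound terms is kept $O(n)$. The paper uses $r \le n+1$ from \autoref{lem:rounds-strict}, so at most $n+1$ levels matter and it gets $(n+1)/n^2 \le 2/n$ directly. Your case split on $T \le 2n$ versus $T > 2n$ (where the threshold exceeds $n$ and the claim is trivial) achieves the same bound. Your aside about replacing $k$ by $\min(k,n)$ is unnecessary and should be dropped, since it would change $\lambda$ and hence the algorithm; the case split alone suffices.
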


\begin{proof}
    We apply \autoref{lem:degree-prob-all} with parameter $\lambda := \frac{1}{6k \ln(3k) \ln n + 2}$. That is, for each $t \leq 6k \ln(3k) \ln n + 2$, the maximum degree of $G^*[J_t]$ exceeds $10k \ln n \cdot (6k \ln (3k) \ln n + 2)$ with probability at most $\frac{1}{n^2}$. Hence, the probability that both of the following conditions hold:

    \begin{itemize}
        \item $r \leq 6k \ln (3k) \ln n + 2$, and
        \item at least one of $G^*[J_1], \dots, G^*[J_r]$ has maximum degree more than $10k \ln n \cdot (6k \ln (3k) \ln n + 2)$
    \end{itemize}

    \noindent is at most $\frac{n+1}{n^2} \leq \frac{2}{n}$ by the union bound and the fact that $r \leq n+1$ (\autoref{lem:rounds-strict}). Excluding the case $r > 6k \ln (3k) \ln n + 2$, which happens with probability at most $\frac{1}{n}$ (\autoref{lem:rounds}), \autoref{lem:max-degree} follows.
\end{proof}

\subsection{The Grand Finale}

Finally, we analyze the total number of colors used in \autoref{alg:main}.

\begin{lemma} \label{lem:colors-part}
    In \autoref{alg:main}, the number of colors used for $V^* \setminus S$ is $O(k^3 \log^2 k \log^3 n)$ in expectation.
\end{lemma}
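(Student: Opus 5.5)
We split the expectation according to a good event and its complement. Let $\mathcal{E}$ be the event that $r \leq 6k \ln(3k) \ln n + 2$ and that every $G^*[J_1], \dots, G^*[J_r]$ has maximum degree at most $D := 10k \ln n \cdot (6k \ln(3k) \ln n + 2)$. By \autoref{lem:rounds} and \autoref{lem:max-degree}, $\Pr[\mathcal{E}] \geq 1 - \frac{3}{n}$. (\autoref{lem:max-degree} already excludes the case that $r$ is large, so $\Pr[\overline{\mathcal{E}}] \leq \frac{3}{n}$ follows directly.)

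On $\mathcal{E}$, we use the standard fact that First-Fit on a graph of maximum degree $D$ uses at most $D+1$ colors. When a vertex $v$ is inserted into $J_i$, its previously arrived neighbors in $J_i$ are a subset of $N(v) \cap J_i$ in the final graph $G^*[J_i]$, so there are at most $D$ of them and some color in $\{1, \dots, D+1\}$ is free. Since the bins use disjoint palettes, the number of colors on $V^* \setminus S$ is at most $r(D+1)$. This is $O(k\log k\log n) \cdot O(k^2 \log k \log^2 n) = O(k^3 \log^2 k \log^3 n)$.

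On $\overline{\mathcal{E}}$, we use the trivial bound that the number of colors used for $V^* \setminus S$ is at most $|V^* \setminus S| \leq n$. This holds because each of the at most $n$ vertices receives a single color; no bound on $r$ is needed, although \autoref{lem:rounds-strict} gives $r \le n+1$ anyway. The contribution of this event to the expectation is therefore at most $\frac{3}{n} \cdot n = 3 = O(1)$.

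Summing the two cases gives $\mathbb{E}[\#\text{colors}] \leq O(k^3 \log^2 k \log^3 n) + 3$, which proves the lemma. The only mild subtlety is the small-$n$ regime, where $\ln n$ may be $0$, for example $n=1$. There the bound $O(\cdot)$ should be read with the usual convention, or the case handled by noting that at most $n$ colors are ever used. No step here is a real obstacle, since all probabilistic work is done in \autoref{lem:max-degree}.
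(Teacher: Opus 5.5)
Your proof is correct and follows the same route as the paper: on the high-probability event supplied by \autoref{lem:rounds} and \autoref{lem:max-degree} you bound the colors by $r(D+1)$ via First-Fit with disjoint palettes per bin, and on the complement you use the trivial bound of $n$ colors, which contributes only $O(1)$ to the expectation. The only difference is cosmetic: you take the failure probability to be $\frac{3}{n}$, which comes from the proof of \autoref{lem:max-degree} rather than its statement, whereas the paper applies the union bound to the two lemma statements and gets $\frac{4}{n}$ (what a strict citation of the statements yields); either way the additive term is $O(1)$.
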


\begin{proof}
    If the maximum degrees in $G[J_1], \dots, G[J_r]$ are at most $\Delta$, then First-Fit uses at most $\Delta + 1$ colors, resulting in at most $r(\Delta + 1)$ colors for the entire $V^* \setminus S$. Therefore, by \autoref{lem:rounds}, \autoref{lem:max-degree}, and the union bound, we use at most $O(k \log k \log n) \cdot O(k^2 \log k \log^2 n) = O(k^3 \log^2 k \log^3 n)$ colors with probability at least $1 - \frac{4}{n}$. In the remaining case, which occurs with probability at most $\frac{4}{n}$, we may use up to $n$ colors, but this is negligible.
\end{proof}

\begin{lemma} \label{lem:recursion}
    The number of calls to $\textsf{COLORING}_k(\cdot)$ in \autoref{alg:main} is at most $2 \log_2 n + 1$ with probability at least $1 - \frac{1}{n}$.
\end{lemma}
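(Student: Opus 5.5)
The recursion in \autoref{alg:main} produces a chain of sizes $n_0 = n, n_1, n_2, \dots$, where $n_{j+1} \sim \textsf{Binomial}(n_j, \frac{1}{2})$ given $n_j$. A further call is made exactly while $n_j \geq 1$. So the number of calls is the first index $j$ with $n_j = 0$, and we must show this index is at most $2\log_2 n + 1$ with probability at least $1 - \frac{1}{n}$.

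The cleanest route is to track each vertex individually. By \autoref{prop:random-s}, at each level the set of the first $n_{j+1}$ vertices is a uniformly random subset of the current $n_j$ vertices. Equivalently, each surviving vertex independently survives to the next level with probability $\frac{1}{2}$. Hence a fixed vertex of $V^*$ survives $j$ levels with probability exactly $2^{-j}$, and by linearity of expectation
\begin{equation*}
    \mathbb{E}[n_j] = n \cdot 2^{-j}.
\end{equation*}
One can also see this directly from $\mathbb{E}[n_{j+1} \mid n_j] = n_j/2$ and the tower property, which avoids any independence claim.

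Now take $j^* := \lceil 2 \log_2 n \rceil$. Then $\mathbb{E}[n_{j^*}] \leq n \cdot n^{-2} = \frac{1}{n}$. Since $n_{j^*}$ is a non-negative integer, Markov's inequality gives $\Pr[n_{j^*} \geq 1] \leq \frac{1}{n}$. On the complementary event $n_{j^*} = 0$, the call $\textsf{COLORING}_k(n_{j^*-1})$ makes no further recursive call. The calls are therefore those with arguments $n_0, \dots, n_{j^*-1}$, possibly together with one degenerate call on $0$ vertices depending on convention. In either case the total is at most $j^* + 1 \leq 2\log_2 n + 2$ calls.

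The main obstacle is off-by-one bookkeeping to reach exactly $2\log_2 n + 1$. I would pin down the count as follows: the call on $n_j$ is made if and only if $n_j \geq 1$, since the algorithm recurses only when $m \geq 1$. The number of calls is then the number of $j \geq 0$ with $n_j \geq 1$. If $n_{j^*} = 0$ with $j^* = \lceil 2\log_2 n\rceil$, this count is at most $j^*$. When $2\log_2 n$ is not an integer, $j^* \leq 2\log_2 n + 1$ follows from the ceiling. When it is an integer, $j^* = 2\log_2 n$ and the bound holds directly. The remark's variant that excludes $m = n$ does not affect the outcome, so it can be ignored.
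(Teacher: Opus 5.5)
Your proof is correct and is essentially the paper's argument. Bounding $\Pr[n_{j^*} \geq 1]$ by $\mathbb{E}[n_{j^*}] = n \cdot 2^{-j^*} \leq \frac{1}{n}$ is the paper's union bound over vertices, each of which survives $2\log_2 n$ levels with probability at most $\frac{1}{n^2}$. Your explicit count of calls as the number of indices $j \geq 0$ with $n_j \geq 1$, which is at most $\lceil 2\log_2 n \rceil$ on the good event, handles the off-by-one more carefully than the paper does. The same goes for your observation that the tower property alone suffices, since the chain of $m$'s does not depend on the graph.
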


\begin{proof}
    By \autoref{prop:random-s}, each vertex $v \in V^*$ is put into $S$, i.e., colored in the recursion, with probability $\frac{1}{2}$. Hence, each vertex $v \in V^*$ goes down for at least $2 \log_2 n$ recursions with probability at most $2^{-2 \log_2 n} = \frac{1}{n^2}$. If this does not happen for any $v \in V^*$, then the recursion depth in \autoref{alg:main} is at most $2 \log_2 n$. By the union bound, this occurs with probability at least $1 - \frac{1}{n^2} \cdot |V^*| = 1 - \frac{1}{n}$.
\end{proof}

We are now ready to prove our main theorem, which shows that \autoref{alg:main} uses at most $O(k^3 \log^2 k \log^4 n)$ colors in expectation.

\begingroup
\def\thetheorem{\ref{thm:main}}
\begin{theorem}[restated]
    In the random-order model, when both $n$ and $k$ are known in advance, there exists a randomized online algorithm that colors any $k$-colorable $n$-vertex graph using $O(k^3 \log^2 k \log^4 n)$ colors in expectation.
\end{theorem}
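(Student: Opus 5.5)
We prove the theorem by analyzing the recursion of \autoref{alg:main} level by level. Each call to $\textsf{COLORING}_k(\cdot)$ uses its own disjoint palette, and these palettes are also disjoint from those of its recursive calls. So the total number of colors is the sum, over all calls, of the number of colors that call uses on its own ``last part'' $V^* \setminus S$. Correctness (properness) is immediate: within one call, each bin $G[J_i]$ is colored properly by First-Fit, and distinct bins and distinct calls use disjoint colors.

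\textbf{Step 1: the cost of a single call.} Consider a call on $n'$ vertices. By the random-order model, conditioned on the set of vertices handed to a recursive call, their internal arrival order is still uniformly random. Hence \autoref{lem:colors-part} applies to every call, with $n'$ in place of $n$; note that the graph induced on those vertices is still $k$-colorable. One subtlety: \autoref{alg:main} recomputes $\lambda$ from the current size $n'$. Applying \autoref{lem:colors-part} to a call on $n' \leq n$ vertices therefore bounds its expected cost by $O(k^3 \log^2 k \log^3 n') \leq C k^3 \log^2 k \log^3 n$ for an absolute constant $C$. The small cases $n' \leq$ const are trivially $O(1)$ colors. Conditioning on the random sizes of earlier levels is harmless, because the bound is uniform in $n'$.

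\textbf{Step 2: summing over the recursion.} The calls form a chain $n = n_0 > n_1 > \cdots$ (taking the variant that excludes $m = n$, as in the remark). Let $D$ be the number of calls. The expected total number of colors is
\[
\sum_{j \geq 0} \mathbb{E}\bigl[\text{cost of call } j \cdot \mathbf{1}[D > j]\bigr].
\]
I would bound this by splitting according to the event in \autoref{lem:recursion}:
\begin{itemize}
\item If $D \leq 2\log_2 n + 1$, then summing the per-call expectations over at most $2\log_2 n + 1$ levels gives $O(\log n) \cdot O(k^3 \log^2 k \log^3 n)$. More carefully, for each fixed $j \leq 2\log_2 n$, the term $\mathbb{E}[\text{cost}_j \mathbf{1}[D>j]]$ is at most $C k^3 \log^2 k \log^3 n$ by Step 1 (conditioning on $n_j$).
\item For $j > 2\log_2 n$, use the crude bound: the total number of colors ever used is at most $n$. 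Hence the contribution of all deeper levels is at most $n \cdot \Pr[D > 2\log_2 n + 1] \leq n \cdot \frac{1}{n} = 1$, by \autoref{lem:recursion}.
\end{itemize}
Adding the two parts yields $O(k^3 \log^2 k \log^4 n)$. Strictly, the second part also needs the levels up to depth $2\log_2 n$ to be handled even when $D$ is large, but those are already covered by the per-level bound. So the split is simply levels $j \leq 2\log_2 n$ versus all deeper levels together.

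\textbf{Main obstacle.} The delicate point is justifying in Step 1 that each recursive call satisfies the hypotheses of \autoref{lem:colors-part}. Two things need care. First, the prefix $S$ of a uniformly random permutation, with $m$ drawn independently, is itself uniformly ordered, so \autoref{prop:random-s} applies inside the subcall. Second, the ``$n$ colors in the bad case'' argument inside \autoref{lem:colors-part} gives only $O(1)$ extra per call when the call size is $n'$. Here I would rely on the lemma's expectation bound directly, with an additive $O(1)$ per level absorbed into the main term. Tiny calls ($n' = 1$) cost one color each, and there are $O(\log n)$ levels on the good event, which is also absorbed.
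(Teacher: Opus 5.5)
Your proof is correct and follows essentially the same route as the paper: bound each call's expected cost by \autoref{lem:colors-part}, justified by the uniformly random order inside each recursive call. Then sum over the shallowest $\lfloor 2\log_2 n\rfloor + 1$ levels, and charge all deeper levels at most $n$ colors on an event of probability at most $\frac{1}{n}$ by \autoref{lem:recursion}. Your extra bookkeeping (the bound being uniform in $n' \le n$ despite $\lambda$ being recomputed, and the additive $O(1)$ per call) only makes explicit what the paper leaves implicit.
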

\addtocounter{theorem}{-1}
\endgroup

\begin{proof}
    For each call to $\textsf{COLORING}_k(\cdot)$, we use $O(k^3 \log^2 k \log^3 n)$ colors in expectation, due to \autoref{lem:colors-part}. Note that we can assume random arrivals in each call to $\textsf{COLORING}_k(m)$, because $m$ does not depend on the graph structure (i.e., it depends only on random choices), and the probability of this event does not change by shuffling the order of the first $m$ vertices, due to the random-order assumption in the original graph $G^*$.
    
    Hence, across the shallowest $\lfloor 2 \log_2 n + 1 \rfloor$ calls to $\textsf{COLORING}_k(\cdot)$ (or all calls if there are fewer), we use $O(k^3 \log^2 k \log^4 n)$ colors in expectation. If there are more than $2 \log_2 n + 1$ calls, at most $n$ colors may be used, but this probability is at most $\frac{1}{n}$ (\autoref{lem:recursion}), so this is negligible.
\end{proof}


\section{The Competitive Ratio} \label{sec:competitive-ratio}

We consider online coloring with random arrivals for general graphs. In online coloring, a lower bound of $\Omega(\frac{n}{\log^2 n})$ on the competitive ratio is known for both deterministic and randomized algorithms \cite{HS94}. However, with the random-order model, we can do better.

\begingroup
\def\thetheorem{\ref{thm:main2}}
\begin{theorem}[restated]
    In the random-order model, when $n$ is known in advance, there exists a randomized online algorithm that attains a competitive ratio of $\widetilde{O}(n^{2/3})$.
\end{theorem}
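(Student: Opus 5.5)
The plan is to reduce to \autoref{alg:main}, with the unknown chromatic number replaced by a quantity the algorithm can compute, and to pay for large chromatic numbers with a trivial coloring. Write $\chi := \chi(G^*)$. The recursive structure of \autoref{alg:main} is kept: sample $m \sim \textsf{Binomial}(n,\frac12)$, recurse on the first $m$ vertices, and use disjoint color sets across levels. The only change is how each level chooses its parameter. After the learning prefix $S$ has arrived, the algorithm computes $k_S := \chi(G[S])$ exhaustively, which is allowed since exponential time is permitted. Since $G[S]$ is a subgraph of $G^*$, we always have $k_S \le \chi$.

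The level then branches into two cases.
\begin{itemize}
\item If $k_S \ge n^{1/3}$, every vertex of $V^*\setminus S$ gets a fresh color. This uses at most $n$ colors, which is at most $n^{2/3}\chi$.
\item If $k_S < n^{1/3}$, the level runs the learning and output phases of \autoref{alg:main} with $\lambda := \frac{1}{6k_S\ln(3k_S)\ln n+2}$.
\end{itemize}

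The key step is to re-examine the analysis of \autoref{sec:analysis} and separate the two places where $k$ enters.
\begin{itemize}
\item In \autoref{lem:rounds}, $k$ enters only through \autoref{lem:hardcore-5} applied to $G[A_i]$ with $A_i\subseteq S$. Hence $k_S$-colorability suffices, and $r \le 6k_S\ln(3k_S)\ln n+2$ with probability at least $1-\frac1n$.
\item In \autoref{lem:magic}, $k$ enters only through $|Y(\mathcal H)|\ge |X(\mathcal H)|/k$, where $X(\mathcal H)\subseteq V^*$. Here we must use $\chi$ itself. The lemma holds verbatim with exponent $|N(v)\cap J_t|/\chi$ for any $\lambda>0$.
\end{itemize}
Consequently \autoref{lem:degree-prob} and \autoref{lem:degree-prob-all} hold with threshold $\frac{10\chi\ln n}{\lambda}$ for all $t\le 1/\lambda$. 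As in \autoref{lem:max-degree}, with probability at least $1-\frac3n$ every bin has maximum degree $O(\chi\, k_S\log k_S\log^2 n)$. The number of colors at this level is then
\[
O(k_S\log k_S\log n)\cdot O(\chi k_S\log k_S \log^2 n)=\chi\cdot O(k_S^2\log^2 k_S\log^3 n)=\chi\cdot\widetilde O(n^{2/3}),
\]
using $k_S<n^{1/3}$. The failure event contributes at most $\frac4n\cdot n$ colors in expectation, which is negligible.

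Finally, sum over levels exactly as in the proof of \autoref{thm:main}. The random-order property is inherited by each recursive call, and by \autoref{lem:recursion} there are $O(\log n)$ levels except with probability $\frac1n$; in that case at most $n$ colors are used, contributing $O(1)$ in expectation. Each level costs $\chi\cdot\widetilde O(n^{2/3})$ in expectation in either branch, so the total expected number of colors is $\chi\cdot\widetilde O(n^{2/3})$, which is the claimed competitive ratio. Knowledge of $n$ is used for $\lambda$, the threshold $n^{1/3}$, and the recursion. Note that $n$ in the recursive calls should remain the original $n$, or be replaced by $m$; both give the same bound.

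The main obstacle is the decoupling step. We must check that nothing in the proof of \autoref{lem:magic} or in its consequences relies on $\lambda$ being tied to the true chromatic number $\chi$. In particular, the restriction $t\le 1/\lambda$ must cover all $t\le r$, and this holds because both are controlled by $k_S$ alone.
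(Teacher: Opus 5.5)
\textbf{Your reduction has a real gap, in the claim that \autoref{lem:magic} ``holds verbatim'' when $\lambda$ is chosen from the data. It can be repaired with a short extra argument.}

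Your separation of the two roles of $k$ is correct. \autoref{lem:rounds} only needs $G[A_i]\subseteq G[S]$ to be $k_S$-colorable. The inequality $|Y(\mathcal H)|\ge|X(\mathcal H)|/k$ in \autoref{lem:magic} only needs $\chi$.

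The proof of \autoref{lem:magic}, through \autoref{prop:history}, requires $\lambda$ to be one fixed number for every history. The map $F(\mathcal H,W)$ replaces $A_1=S$ by $S\cup W$. The bound $p(F(\mathcal H,W))\ge p(\mathcal H)\,\{\lambda/(1+\lambda)^t\}^{|W|}$ relies on the factors $\lambda^{|I_i|}$ cancelling.

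In your algorithm $\lambda$ is a function of $\chi(G[S])$. Since $W$ is an independent set, $\chi(G[S\cup W])$ can equal $\chi(G[S])+1$. In that case:
\begin{itemize}
\item $F(\mathcal H,W)$ is generated with a smaller $\lambda'$. The ratio $p(F(\mathcal H,W))/p(\mathcal H)$ then contains $(\lambda'/\lambda)^{\sum_i|I_i|}$ and partition functions at two different parameters, and nothing in the proof controls these.
\item If $\chi(G[S\cup W])$ reaches $n^{1/3}$, then $F(\mathcal H,W)$ is not a learning-phase history at all, because the trivial branch is taken.
\end{itemize}
So \eqref{eq:prob-sum-2} is no longer justified. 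Your ``main obstacle'' paragraph checks that $\lambda$ need not be tied to $\chi$. What the injection actually needs is that $\lambda$ not depend on the random set $S$, and in your algorithm it does.

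\textbf{Repair.} Use a union bound over the possible values of $k_S$. For each fixed $\kappa<n^{1/3}$, let $\mathcal A_\kappa$ be the non-adaptive level that always runs both phases with $\lambda_\kappa:=\frac{1}{6\kappa\ln(3\kappa)\ln n+2}$. Conditional on any $S$ with $\chi(G[S])=\kappa$, your level and $\mathcal A_\kappa$ produce identically distributed $I_1,\dots,I_r$ and bins. Hence, for $t\le 1/\lambda_\kappa$,
\[
\Pr\Bigl[k_S=\kappa,\ v\in J_t,\ |N(v)\cap J_t|\ge \tfrac{10\chi\ln n}{\lambda_\kappa}\Bigr]\le \Pr_{\mathcal A_\kappa}\Bigl[v\in J_t,\ |N(v)\cap J_t|\ge \tfrac{10\chi\ln n}{\lambda_\kappa}\Bigr]\le \frac{1}{n^3}.
\]
The last step is now a legitimate use of \autoref{lem:magic} (with $k:=\chi$) at the fixed parameter $\lambda_\kappa$. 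Sum over $\kappa<n^{1/3}$, over $t\le 1/\lambda_\kappa=\widetilde{O}(n^{1/3})$, and over $v\in V^*$. This gives failure probability $\widetilde{O}(n^{-4/3})$, and the rest of your argument then goes through.

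\textbf{Comparison with the paper.} The paper avoids data-dependent parameters altogether. It runs the non-adaptive $\textsf{COLORING}_{2^q}(n)$ for all $q\le\lceil\log_2 n\rceil$ in parallel with disjoint palettes. Each vertex is colored by the process that has used the fewest colors so far. The total is then at most $(\lceil\log_2 n\rceil+1)\,c_{q^*}$, which is $O(\chi^3\log^2\chi\log^5 n)$ in expectation, using \autoref{thm:main} as a black box. Combined with the trivial bound $n$, this gives $\widetilde{O}(n^{2/3})$.

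Your patched route avoids the parallel simulation and gives a per-level bound $\chi\cdot O(k_S^2\log^2 k_S\log^3 n)$. The price is that it has to reopen the analysis, including the conditioning step above, rather than using \autoref{thm:main} directly.
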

\addtocounter{theorem}{-1}
\endgroup

\begin{proof}
    The idea is to simulate the behavior of \autoref{alg:main} for $2^q$-colorable graphs, $\textsf{COLORING}_{2^q}(n)$, on the same input graph $G$, for each $q = 0, 1, \dots, \lceil \log_2 n \rceil$ in parallel processes. We assume disjoint sets of colors for different processes. For each arrival of vertex $v \in V$, we color $v$ using the color selected by the process that has used the fewest colors so far.

    We now analyze the number of colors used. Let $q^* := \lceil \log_2 \chi(G^*) \rceil$. Since $G^*$ is $2^{q^*}$-colorable, by \autoref{thm:main}, the number of colors used by the process with $q = q^*$, denoted by $c_{q^*}$, satisfies
    \begin{equation*}
        \mathbb{E}[c_{q^*}] = O((2^{q^*})^3 \log^2 (2^{q^*}) \log^4 n) = O(\chi(G^*)^3 \log^2 \chi(G^*) \log^4 n).
    \end{equation*}
    Moreover, for each process, at most $c_{q^*}$ colors are used in the \emph{actual} online coloring, so the number of colors used overall is at most $c_{q^*} (\lceil \log_2 n \rceil + 1)$. Therefore, this algorithm uses $O(\chi(G^*)^3 \log^2 \chi(G^*) \log^5 n)$ colors in expectation. Also, it is obvious that we use at most $n$ colors. Thus, we attain a competitive ratio of $O(n^{2/3} \log^{7/3} n)$, with the worst-case bound attained when $\chi(G^*) = \Theta(\frac{n^{1/3}}{\log^{7/3} n})$.
\end{proof}


\section{Conclusion} \label{sec:conclusion}

In this paper, we presented an algorithm for online coloring with random arrivals that uses $O(k^3 \log^2 k \log^4 n)$ colors for $k$-colorable graphs. For online coloring as a whole, we improved the performance for the case $k = 3$ from $\widetilde{O}(n^{1/2})$ colors \cite{Vis92} to $O(\log^4 n)$ colors. We also achieved a competitive ratio of $\widetilde{O}(n^{2/3})$, which is better than the conventional lower bound $\Omega(\frac{n}{\log^2 n})$ \cite{HS94} and is better than an $\Omega(n^{1-\varepsilon})$ inapproximability bound for graph coloring \cite{Zuc06}. These results demonstrate that the random-order model unlocks new possibilities for improving the performance of online coloring algorithms. It is now natural to pose the following conjectures.

\begin{conjecture}
    In the random-order model, when both $n$ and $k$ are known in advance, there exists a randomized online algorithm that colors any $k$-colorable $n$-vertex graph using $O(k \log n)$ colors in expectation.
\end{conjecture}

\begin{conjecture}
    In the random-order model, when $n$ is known in advance, there exists a randomized online algorithm that attains a competitive ratio of $O(\log n)$.
\end{conjecture}

Finally, a limitation of our result is that it relies on the random-order model and does not extend to adversarial arrivals. It is natural to ask whether the following conjecture holds.

\begin{conjecture}
    For any constant $k \geq 3$, there exists a (randomized) online algorithm for coloring $k$-colorable graphs with $n^{o(1)}$ colors, even without assuming random arrivals.
\end{conjecture}

\section*{Acknowledgements}

We are grateful to our professor Ken-ichi Kawarabayashi for helpful discussions and comments. We also thank Yuta Inoue for his feedback on the Introduction section.

\section*{AI Usage Disclosure}

The authors substantially used OpenAI’s large language model, GPT-6 Astra, during the proof exploration phase. Through an iterative process with the authors, the model generated the proof of the main theorem (\autoref{thm:main}) for the special case $k = 3$. Then, the authors simplified the proof, discussed and obtained the intuition behind the algorithm, and generalized the analysis to all $k \geq 3$. The authors also used GPT-6 Astra and GPT-5.6 Sol to polish the paper's presentation. The authors take full responsibility for the correctness and content of the paper.

\printbibliography

\end{document}